\newtheorem{theorem}{Theorem}
\newtheorem{corollary}{Corollary}
\newtheorem{definition}{Definition} % Use this for non-trivial
\DeclareMathOperator*{\argmax}{arg\,max}
\DeclareMathOperator*{\argmin}{arg\,min}
\newcommand{\E}{\mathbb{E}}
\newcommand{\OPT}{\textrm{OPT}}
\newcommand{\OP}{$\mathrm{NT}$ }
\newcommand{\AOP}{$\mathrm{ANT}$}
\newcommand{\M}{\mathcal{M}}
\newcommand{\A}{\mathcal{A}}
\begin{document}
\title{Finding Any Nontrivial Coarse Correlated Equilibrium Is Hard}
\author{Siddharth Barman\thanks{California Institute of Technology. E-mail: {\tt barman@caltech.edu.}}\and Katrina Ligett\thanks{California Institute of Technology. E-mail: \tt{katrina@caltech.edu.}}}
\date{}
\maketitle

\begin{abstract}

One of the most appealing aspects of the (coarse) correlated equilibrium concept is that natural dynamics quickly arrive at approximations of such equilibria, even in games with many players.
In addition, there exist polynomial-time algorithms that compute exact (coarse) correlated equilibria. In light of these results, a natural question is how {\em good} are the (coarse) correlated equilibria that can arise from any efficient algorithm or dynamics.
%a natural objective is to understand the {\em quality } of the coarse correlated equilibria that arise from any efficient dynamic, or are determined by any polynomial-time algorithm. 

In this paper we address this question, and establish strong negative results. In particular, we show that in multiplayer games that have a succinct representation, it is \rm{NP}-hard to compute any coarse correlated equilibrium (or approximate coarse correlated equilibrium) with welfare strictly better than the \emph{worst} possible. The focus on succinct games ensures that the underlying complexity question is interesting; many multiplayer games of 
%practical 
interest are in fact succinct. Our results imply that, while one can efficiently compute a coarse correlated equilibrium, one cannot provide any nontrivial welfare guarantee for the resulting equilibrium, unless $\rm{P=NP}$. We show that analogous hardness results hold for correlated equilibria, and persist under the egalitarian objective or Pareto optimality. 

To complement the hardness results, we develop an algorithmic framework that identifies settings in which we can efficiently compute an approximate correlated equilibrium with near-optimal welfare. We use this framework to develop an efficient algorithm for computing an approximate correlated equilibrium with near-optimal welfare in  aggregative games. 

\end{abstract}

\maketitle

%%%%%%%%%%%%%%%%%%%%%%%%%

\section{Introduction}

%\katrina{Do we want the title to focus on  PoA? We should discuss. In particular, the audience may only associate PoA with Nash and hence be confused when we start talking about these more ``exotic'' equilibrium concepts\ldots}

Equilibria are central solution concepts in game theory, and questions related to the complexity of equilibrium computation have formed a major thread of research in algorithmic game theory. Arguably the most important equilibrium concepts are the Nash equilibrium~\cite{nash1951non}, correlated equilibrium~\cite{aumann1974subjectivity}, and coarse correlated equilibrium~\cite{hannan1957approximation}. These solution concepts denote distributions over players' action profiles at which no player can benefit by unilateral deviation, and hence represent stable choices of distributions over player actions. Specifically, a Nash equilibrium is defined to be a product of independent distributions (one for each player); correlated and coarse correlated equilibria are general (joint) probability distributions (see Section~\ref{sect:notation} for formal definitions).  %\katrina{Confusing to conflate CE and CCE?} \katrina{Forward reference the formal definitions?}

% \katrina{I guess it's also worth mentioning approximate Nash?}
%While computation of Nash equilibria has in recent years been shown to be computationally hard~\cite{DGP} (this hardness persists even if we consider approximate\footnote{A probability distribution over the players' action profiles is said to be an $\varepsilon$-approximate equilibrium if for any player unilaterally deviating increases utility, in expectation, by at most $\varepsilon$.} Nash equilibrium~\cite{R} and also in games with two players (see~\cite{chen2009settling})), the news for correlated equilibria (CE) and coarse correlated equilibria (CCE) has been more positive. 

While computation of Nash equilibria has in recent years been shown to be computationally hard, even in games with two players~\cite{chen2009settling}, the news for correlated equilibria (CE) and coarse correlated equilibria (CCE) has been more positive. Even in games with many players, there exist a number of natural dynamics that quickly converge to these solution concepts (see, e.g.,~\cite{littlestone1994weighted, foster1998asymptotic, hart2000simple, blum2007external}). In particular, these dynamics induce efficient computation of approximate\footnote{A probability distribution over the players' action profiles is said to be an $\varepsilon$-approximate equilibrium if for any player unilaterally deviating increases utility, in expectation, by at most $\varepsilon$.} CE and CCE in multiplayer games; by contrast, computation of approximate Nash equilibria is computationally hard in multiplayer games~\cite{R}. In fact, {\em exact} CE and CCE are efficiently computable in many classes of multiplayer games~\cite{PR,JLB}.
 
%, which---not just an approximate CE/CCE---can be computed in polynomial time

%we can efficiently finIn addition to dynamics, %\katrina{It's not clear what the distinction is between the previous two sentences. Sid: dynamics vs algorithm}

Another significant thread of research in algorithmic game theory has been the study of the {\em quality} of equilibria, often as measured by the social welfare of the equilibrium or its ratio to the social welfare of the socially optimal outcome (c.f. the extensive literature on the price of anarchy (PoA) \cite{AGTbook}). Given that we know it is possible to efficiently compute CE and CCE, it is natural to ask {\em how good} are the equilibria we can efficiently compute? For example, do existing efficient dynamics find the best such equilibria, or at least ones that approximately optimize the social welfare?
%Another motivation for studying this complexity question stems from the fact that  t
Since the gap between the worst and the best equilibria (CE or CCE), in terms of social welfare, can be large in natural games
%These sub-optimality gaps have been extensively studied in terms of \emph{price of anarchy} (PoA) and \emph{price of stability} (PoS) bounds; see~\cite{AGTbook} and references therein. Whereas PoA is defined to be the ratio of the welfare of the social optimum to the welfare of the worst equilibrium, PoS denotes the ratio of the welfare of the social optimum to the welfare of the best equilibrium. 
%Since in many natural games the PoA is quite large and PoS is fairly low 
(see, e.g.,~\cite{lee2013improved, bilo2013price}), 
%\katrina{This was improved in FOCS 13, so unfortunately my paper is probably not worth citing. \textcolor{green}{Sid:} Your paper has a very useful table that lists both high PoA and low PoS.} 
it is interesting to understand if there exist efficient dynamics or algorithms that avoid---at least to some extent---the bad outcomes.
More generally, one can pose the question of efficiently finding CE and CCE \emph{that optimize an objective} (such as the sum of players' utilities, i.e., the social welfare). 

 %\katrina{I think we need to say something about the importance of the PR paper, even though it's a bit awkward. Maybe you can improve}
 
In their notable work, Papadimitriou and Roughgarden~\cite{PR} show that determining a \emph{socially optimal} CE is \rm{NP}-hard, in a number of succinct multiplayer games. This result intuitively follows from the fact that determining an action profile with maximum welfare---i.e., solving the problem of welfare optimization even without equilibrium constraints---is \rm{NP}-hard in general. The hardness result of~\cite{PR}  leaves open the question of computing \emph{near-optimal} CE/CCE, i.e., whether there exist efficient algorithms that compute CE/CCE with welfare at least, say, $\alpha$ times the optimal, for a nontrivial approximation ratio $\alpha \leq 1$. This question forms the basis of the present work.

\emph{Technical Aside (succinct games):} We note that in general multiplayer games the size of the normal form representation, $N$, is exponentially large in the number of players; one can compute a CE/CCE that optimizes a linear objective by solving a linear program of size polynomial in $N$, and hence the computational complexity of equilibrium computation is not interesting for general games. However, most games of interest---such as graphical games, polymatrix games, congestion games, local effect games, network design games, anonymous games, and scheduling games---admit a succinct representation (wherein the above-mentioned linear program can be exponentially large in the size of the representation), and hence it is such succinctly representable games that we (and previous works) study.\footnote{Note that the optimization problem does not become simpler if, instead of a succinct game, one is given access to a game via a black box which, when given an action profile $a$ as a query, returns the utilities of all the players at $a$.}

%Also, we note that a classical interpretation of a CE is in terms of a mediator who has access to the players' payoff functions and who draws outcomes from a correlated equilibrium's joint distribution over player actions and privately recommends the corresponding actions to each player. The equilibrium conditions ensure that no player can benefit in expectation by unilaterally deviating from the recommended actions. Therefore, the problem we study here is exactly the computational complexity of the problem that a mediator faces if she wishes to maximize social welfare.

% or is Pareto optimal. to dynamics that converge quickly to CE and CCE
%sum of the players' utilities (i.e., maximizes 

%Motivated by these considerations, in the paper we study the complexity of determining CE and CCE that achieve high welfare in multiplayer games.  Our results also address the egalitarian objective (i.e., the max-min objective) and Pareto efficiency.  

%The motivation for studying the complexity of this question is direct: it is desirable---especially, in contexts like multi-agent systems~\cite{shoham2008multiagent}---to know if there exists an efficient dynamic that converges to an equilibrium which, say, maximizes the  social welfare. 

\paragraph{Results} In this paper we establish that, unless $\rm{P=NP}$, there does not exist any efficient algorithm that computes a CCE with welfare better than the \emph{worst possible} CCE,  in succinct multiplayer games (Theorem~\ref{thm:op}). 
%In other words, any non-trivial approximation guarantee on the welfare of a CCE computed by an efficient algorithm or dynamic is unlikely. 
We also establish similar hardness results for computing equilibria under the egalitarian objective or Pareto-optimality. 

Analogous hardness results hold for CE. We note that a classical interpretation of a CE is in terms of a mediator who has access to the players' payoff functions and who draws outcomes from a correlated equilibrium's joint distribution over player actions and privately recommends the corresponding actions to each player. The equilibrium conditions ensure that no player can benefit in expectation by unilaterally deviating from the recommended actions. Therefore, the problem we study here is exactly the computational complexity of the problem that a mediator faces if she wishes to maximize social welfare.

%\footnote{A probability distribution over the players' action profiles is said to be an $\varepsilon$-approximate equilibrium if for any player unilaterally deviating increases utility, in expectation, by at most $\varepsilon$.}

We also extend the hardness result to approximate CE and CCE  (Theorem~\ref{thm:aop}). Therefore, while one can efficiently compute an approximate CE/CCE in succinct multiplayer games, one cannot provide any nontrivial welfare guarantees for the resulting equilibrium (unless $\rm{P=NP}$). 
%\katrina{to avoid confusion, say ``succinct'' here? also, where else do we mention general games in a way that might be confusing?} 

%Our hardness result for high-welfare approximate equilibria, in particular, implies that there does not exist an efficient dynamic that is guaranteed to converge to a CCE with welfare moderately better than the \emph{worst} possible one, unless \rm{P=NP}. 

In addition, we show that this hardness result also holds specifically for potential games (generally considered to be a very tractable class of games), and persists even in settings where the gap between the best and worst equilibrium is large.

We note that in these results, the hardness is not simply borrowed from welfare maximization;\footnote{Welfare maximization refers to the optimization problem of finding an action profile (not necessarily an equilibrium) with maximum possible welfare.} even if the underlying game admits a nontrivial multiplicative approximation for welfare maximization, the problem of determining a CCE with welfare arbitrarily better than the worst CCE remains hard. Another relevant observation is that there always exists an optimal CE/CCE with support size polynomial in the number of players and the number of actions per player.\footnote{This follows from the fact that CE/CCE are defined by a polynomial number of linear constraints. That is, the set of CE/CCE form a polytope that is defined by a polynomial number of linear inequalities. An optimal CE/CCE is an extreme point of this polytope, and hence its support size is polynomially bounded.} Therefore, the fact that in multiplayer games there might exist CE/CCE with exponentially large support size does not, in and of itself, account for the complexity of this problem. 

We complement these hardness results by developing an algorithmic framework for computing an $\varepsilon$-approximate CE with welfare that is additively $\varepsilon$ close to the optimal. This framework establishes a sufficient condition under which the above-mentioned complexity barriers can be circumvented. In particular, we show that if in a given game we can efficiently obtain an \emph{additive} approximation for a \emph{modified-welfare maximization problem}, then we can efficiently compute an approximate CE with high welfare. The modified welfare under consideration can be thought of as a Lagrangian corresponding to the equilibrium constraints (see Definition~\ref{def:mod-wel}), and the modified-welfare maximization problem entails  finding an action profile that maximizes this modified welfare. Note that even if welfare (specified by the given utilities) is nonnegative, modified welfare can be negative for certain action profiles. This notably differentiates welfare maximization and modified-welfare maximization, and provides an idea of the technical challenges that one faces when approximating the modified-welfare maximization problem. (Recall that typical multiplicative-approximation techniques cannot handle negative quantities.) Hence, in a given game, the problem of (nontrivially) approximating the modified-welfare maximization problem can be hard, even if the game admits a nontrivial multiplicative-approximation for welfare maximization.  

Further, we instantiate this algorithmic framework to compute high-welfare approximate CE in \emph{aggregative games}. These are games wherein the utility of each player is a function of her own action and an aggregate (a constant-dimensional summary vector) of all players' actions; see Section~\ref{sect:agg-game} for a formal definition. Aggregative games encompass settings like Cournot oligopolies, Bertrand competitions, weighted congestion games, and anonymous games~\cite{jensen2010aggregative, acemoglu2013aggregate, babichenko2013best, cummings2014privacy}. We develop an efficient additive-approximation algorithm for the modified-welfare maximization problem in aggregative games. Therefore, via the above-mentioned framework, we show how to efficiently compute a high-welfare approximate CE in aggregative games. \\

\paragraph{Related Work}
%We consider games with $n$ players and $m$ actions per player. In this setup, the number of action profiles is $N=m^n$. Note that a social-welfare-optimizing CE of such a game can be computed in time exponential in $n$---specifically, in poly($N$) time---via linear programming. 

Papadimitriou and Roughgarden~\cite{PR} showed that the problem of computing an exactly optimal CE is \rm{NP}-hard for many relevant classes of multiplayer games, including congestion games, graphical games, polymatrix games, local effect games, and scheduling games. 
%To ensure that the input size is polynomially bounded, they consider \emph{succinct games}. These are games where a succinct input suffices to specify the underlying utilities of the players (see Section~\ref{sect:notation} for a formal definition); all the classes of games mentioned above are succinct. 
%Along these lines, in this paper we establish our hardness result for succinct games. 
%input to the optimization problem under consideration representation in which a polynomial-size input 
Specific instances in which the hardness result of~\cite{PR} can be completely circumvented, i.e., settings where an \emph{exactly optimal} CE can be efficiently computed, were identified by Jiang and Leyton-Brown~\cite{JLB}. The results in~\cite{PR} and~\cite{JLB} leave open the question of efficiently computing a CE with \emph{near-optimal} welfare, i.e., the question of \emph{approximating} the optimization problem under consideration. The complexity of this approximation is the focus of our work. 

Our main result is negative. In order to prove a positive result for the specific case of computing near-optimal approximate CE in \emph{aggregative games}, we consider a \emph{modified-welfare maximization problem} (MWMP); see Section~\ref{sect:blackwell} for a formal definition. Jiang and Leyton Brown~\cite{JLB} consider classes of games in which the MWMP can be solved optimally, and use the ellipsoid method to find an optimal CE. In our setting, exactly solving the MWMP is not computationally feasible (and hence the framework of~\cite{JLB} cannot be applied to aggregative games\footnote{Knapsack reduces to the problem of welfare maximization in aggregative games.}), but we show that an additive approximation of MWMP suffices to find a near-optimal approximate CE. This entails developing a new algorithm that does not rely on the ellipsoid method.

There is prior work~\cite{charikar2008online,kleinberg2009multiplicative,balcan2013circumventing} on dynamics that quickly converge to high-welfare CCE in isolated, specific classes of games, such as fair cost sharing games; our results show that it is unlikely that such results can be significantly generalized. Marden et al.~\cite{marden2012achieving} develop dynamics that {\em eventually} converge to Pareto-optimal CCE (see also~\cite{song2011optimal}); these works do not establish polynomial rate of convergence for the proposed dynamics.

\section{Notation}
\label{sect:notation}
In this paper we consider 
%$n$-player $m$-action games, i.e., 
games with $n$ players and $m$ actions per player. We use $A_p$ to denote the set of actions available to the $p$th player and $A$ to denote the set of action profiles,  $A := \prod_p A_p$.  We write $u_p : A \rightarrow [0,1]$ for the (normalized) utility of player $p$, and $w: A \rightarrow \mathbb{R}$ is the welfare of an action profile, $w(a) := \sum_{p=1}^n u_p(a)$.\footnote{When there are multiple games under consideration within a single proof, we annotate the $w$ to indicate to which game it pertains.} For an action profile $a \in A$, let $a_{-p}$ denote the profile of actions chosen by players other than $p$. With $A_{-p} :=\prod_{q\neq p} A_q$, we have $a_{-p} \in A_{-p}$.

As is typical in the literature, we say that a game is succinct if it has an efficient representation. Formally, an $n$-player $m$-action game is said to be \emph{succinct} if the player utilities are completely specified via a polynomial-sized string from an input set $I$. Specifically, for a succinct game, there exists a polynomial (in $n$ and $m$) time algorithm $U$ that, given a representation $z \in I$ along with a player $p$ and action profile $a$, returns the utility $u_p(a) = U(z, p, a)$. The game is denoted by $\Gamma(z)$. Many important classes of multi-player games are succinct, e.g., symmetric games, anonymous games, local effect games, congestion games, polymatrix games, graphical games, and network design. This paper is focused on succinct games, since this lets us formally treat settings in which the input, i.e., the utilities in the game, can be efficiently represented. Note that the hardness question becomes moot if we consider the normal form representation of an $n$-player $m$-action game as our ``input,'' since in this case the input itself is exponentially large in $n$ and $m$. Our hardness results imply the intractability of determining high-welfare CCE in games wherein the underlying utilities are specified through a black box. 
%via elements of an efficiently recognizable \katrina{recognizable?}
%polynomial-size representation

We denote the set of probability distributions over a set $B$  by $\Delta(B)$. Given a distribution $x$ over the action profiles $A$, i.e., $x \in \Delta(A)$, we use $u_p(x)$ for the expected utility of player $p$ under distribution $x$. Similarly, we write $w(x)$ to denote the expected welfare under $x$.

\begin{definition}[Correlated Equilibrium]
\label{def:ce}
A probability distribution $x \in \Delta(A)$ is said to be a correlated equilibrium if for every player $p$ and every actions $i,j \in A_p$ we have
\begin{align*}
\sum_{a_{-p} \in A_{-p}}  [u_p(j,a_{-p}) - u_p(i,a_{-p}) ] x(i,a_{-p}) \leq 0,
\end{align*}
where $(i,a_{-p})$ denotes an action profile in which player $p$ plays action $i$ and the other players play $a_{-p}$.
\end{definition}

\begin{definition}[Coarse Correlated Equilibrium]
\label{def:cce}
A probability distribution $x \in \Delta(A)$ is said to be a coarse correlated equilibrium if for every player $p$ and every action $j \in A_p$ we have
\begin{align*}
\sum_{a \in A}  [u_p(j,a_{-p}) - u_p(a) ] x(a) \leq 0,
\end{align*}
where $(j,a_{-p})$ denotes an action profile in which player $p$ plays action $j$ and the other players play $a_{-p}$.
\end{definition}

Along these lines, the definition of an approximate correlated equilibrium is as follows:  %The following definition of approximate correlated equilibrium is typical in literature. 
\begin{definition}[$\varepsilon$-Correlated Equilibrium]
\label{def:eps-CE}
A probability distribution $x \in \Delta(A)$ is said to be an $\varepsilon$-correlated equilibrium if for every player $p$ and every actions $i,j \in A_p$ we have
\begin{align*}
\sum_{a_{-p} \in A_{-p}}  [u_p(j,a_{-p}) - u_p(i,a_{-p}) ] x(i,a_{-p}) \leq \varepsilon.
\end{align*}
\end{definition}

%Note that this definition is interesting only when $\varepsilon \leq 1/m$. 

%this definition is weaker than a definition~\ref{def:approxCE} based on internal regret, see below. Also, it
\begin{comment}%%%%%%%%%%%%%%%%%%%%%%%%%%%%%%%%%%%%%%%%%%%%%%
The following definition of approximate correlated equilibrium is based on internal-regret functions $f:A_p \rightarrow A_p$. Such functions are also called switching rules.

\begin{definition}[Strong $\varepsilon$-Correlated Equilibrium]
\label{def:approxCE}
Write $R_f^p(a):=u_p(f(a_p), a_{-p}) - u_p(a)$ to denote the regret of player $p$ for not implementing switching rule $f$ at action profile $a$. A distribution $x \in \Delta(A)$ is a strong $\varepsilon$-correlated equilibrium ($\varepsilon$-CE) if $\E_{a\sim x}[R_f^p(a)] \leq \varepsilon$ for every player $p$ and every mapping $f : A_p \rightarrow A_p$.
\end{definition}

\end{comment}%%%%%%%%%%%%%%%%%%%%%%%%%%%%%%%%%%%

Finally, we define $\varepsilon$-coarse correlated equilibrium. % using the notion of (external) regret.

\begin{definition}[$\varepsilon$-Coarse Correlated Equilibrium]
\label{def:approxCCE}
A probability distribution $x \in \Delta(A)$ is said to be an $\varepsilon$-coarse correlated equilibrium if for every player $p$ and every action $i \in A_p$ we have
\begin{align*}
\sum_{a \in A}  [u_p(i,a_{-p}) - u_p(a) ] x(a) \leq \varepsilon.
\end{align*}
\end{definition}

\section{Hardness Results}
In this section we show that, given a succinct game, it is $\rm{NP}$-hard to compute a CCE with welfare strictly better than the lowest-welfare CCE. In particular, we develop a reduction that shows that the following decision problem is $\rm{NP}$-hard.
%Given a game $\Gamma$, does $\Gamma$ admit a CCE $x$ such that $w(x) > w(x')$? Here $x'$ denotes the worst---in terms of social welfare $w$---CCE of $\Gamma$. We will write \OP (Outperforming PoA) to denote this problem. Note that \OP requires strict inequality: $w(x) > w(x')$. 

\begin{definition}[\OP\hspace*{-4pt}]
Let $\Gamma$ be an $n$-player $m$-action game with a succinct representation. \OP is defined to be the problem of determining whether $\Gamma$ admits a coarse correlated equilibrium $x$ such that $w(x) > w(x')$. Here $x'$ denotes the worst CCE of $\Gamma$, in terms of social welfare $w$. 
\end{definition}
 
\hfill 

The hardness of \OP implies that, under standard complexity-theoretic assumptions, any nontrivial approximation of the the optimization problem (\ref{opt-cce}) is impossible. Specifically,  let $x^*$ denote an optimal CCE of a game (i.e, $x^*$ is an optimal solution of the optimization problem (\ref{opt-cce})) and $x'$ be a CCE with minimum possible welfare. Write $\beta:= w(x') / w(x^*)$, i.e., the ratio of the welfare of the worst CCE to that of the best CCE. In games in which a CCE can be computed efficiently, an efficient $\beta$-approximate solution of (\ref{opt-cce}) is direct;  we can simply return an arbitrary CCE.  The hardness of \OP implies that no approximation factor better than $\beta$ can be achieved in general games. A proof of the $\rm{NP}$-hardness of \OP is detailed below. 

\begin{align}
\max \ \ & \ \ \sum_{p=1}^n u_p(x) \nonumber \\
\textrm{subject to} \ \ & \ \ x \textrm{ is a CCE} \label{opt-cce}
\end{align}

%Note that the result implies that determining if a (general) game has multiple \emph{exact} CCE is $\rm{NP}$-hard.
\begin{theorem}
\label{thm:op}
\OP is $\rm{NP}$-hard in succinct multiplayer games. 
\end{theorem}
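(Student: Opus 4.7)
The plan is to prove hardness by a polynomial-time reduction from SAT. Given a CNF formula $\phi$ with $n$ variables, I construct a succinct $(n+O(1))$-player game $\Gamma(\phi)$ in which it can be shown that (i) $\Gamma(\phi)$ always admits a distinguished ``baseline'' CCE of some fixed welfare $W_0$, (ii) if $\phi$ is satisfiable then $\Gamma(\phi)$ has a (pure) CCE of strictly larger welfare $W_1 > W_0$, and (iii) if $\phi$ is unsatisfiable then every CCE of $\Gamma(\phi)$ has welfare exactly $W_0$. By definition of \OP, this reduction will establish the theorem.

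The construction I have in mind uses $n$ ``variable'' players, one for each Boolean variable, each with two actions \{True, False\}, together with a small constant number of auxiliary ``enforcer'' players whose role is to punish any behaviour that does not correspond to a satisfying assignment. The enforcer gadget is modeled on matching-pennies: an enforcer has a ``check'' action and an ``idle'' action, and its utility is designed so that it strictly prefers to check whenever the marginal distribution over the variable players' profiles places positive mass on an unsatisfying profile, and strictly prefers to remain idle only when all mass lies on satisfying profiles. The variable players' utilities are then engineered so that the enforcer's choice of ``check'' vs.\ ``idle'' reduces everyone's welfare to $W_0$ (via a small penalty that kicks in whenever the enforcer randomizes, together with a scaling that makes the enforcer's utility gap dominate every variable player's utility gap). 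A satisfying assignment $a^*$ combined with the enforcer playing ``idle'' will then be a pure CCE of welfare $W_1>W_0$, giving direction (ii).

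The main technical obstacle, as in most hardness results of this flavour, will be direction (iii): the rigidity of the CCE polytope in the UNSAT case. Concretely, I need to show that whenever $\phi$ is unsatisfiable, the CCE constraints \emph{together} (not just any single one) pin down the welfare of every CCE to be exactly $W_0$. The key step is to use the enforcer's no-regret inequality from Definition~\ref{def:cce}: since every variable-player profile falsifies some clause, for any $x\in\Delta(A)$ with non-trivial mass on variable profiles the enforcer's deviation to ``check'' would yield a strict improvement, unless $x$ places total mass $1$ on action profiles in which the enforcer is already checking. With the utility scaling chosen so that the enforcer's deviation gain strictly dominates any variable player's incentive, this forces, in every CCE, the enforcer to check with probability $1$; by symmetric use of a second matching-pennies enforcer (or by symmetry of the checking gadget), the variable players are then forced into a distribution whose welfare collapses exactly to $W_0$.

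The hardest piece of the argument will be calibrating the utility magnitudes so that (a) the enforcer's CCE constraint is the strictly binding one in the UNSAT case, (b) the baseline welfare $W_0$ really is achieved on the entire collapsed CCE polytope and not merely upper-bounded, and (c) the reduction size stays polynomial (so the succinct representation can evaluate any $u_p(a)$ in polynomial time from $\phi$, which is immediate since checking a clause is polynomial). Once the construction is in place, both directions follow from the definition of CCE; the theorem then extends to other hardness statements in the paper (egalitarian, Pareto, CE, approximate variants) via minor modifications of the same gadget.
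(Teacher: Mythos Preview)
Your reduction-from-SAT plan is a genuinely different route from the paper's proof, and while it is plausible in outline, it is both more intricate and leaves the hardest step only sketched.

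The paper does not reduce from SAT at all. Instead it starts from any succinct game $G$ in which \emph{welfare maximization} (deciding whether $\max_a w(a)\ge\OPT$) is already NP-hard, and builds $G'$ by giving every player one extra action $b_p$. On the original profiles $A$, the game $G'$ is made identical-interest with $u'_p(a)=w(a)/n$; the all-$b$ profile $b=(b_1,\dots,b_n)$ is engineered to be the minimum-welfare CCE with $w'(b)=\varepsilon$; and a unilateral deviation to $b_p$ from any $a\in A$ yields exactly $\OPT/n$. The single CCE inequality ``player $p$ cannot gain by switching to $b_p$'' then does all the work: summing it over players shows that any CCE $x$ with $w'(x)>w'(b)$ must place positive mass on some $a\in A$ with $w(a)\ge\OPT$. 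There is no enforcer, no matching-pennies gadget, no calibration of competing incentive magnitudes---the threat action $b_p$ \emph{is} the enforcer, and it is built symmetrically into every player.

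Compared with this, your approach front-loads the combinatorics (you re-encode SAT inside the game rather than inheriting hardness from a known-hard welfare problem) and introduces auxiliary players whose incentives must be tuned against those of the variable players. The step you correctly flag as hardest---showing that in the UNSAT case the entire CCE polytope has \emph{constant} welfare $W_0$, not merely welfare $\le W_0$---is precisely where such gadgets tend to leak: once your enforcer is forced to ``check'' with probability one, the variable players' (now correlated and essentially unconstrained) play must leave welfare unchanged, and simultaneously that same ``check'' configuration must remain a CCE of welfare exactly $W_0$ in the SAT instance so that $W_0$ is genuinely the worst CCE there too. These conditions can probably be met with careful bookkeeping, but the paper's construction sidesteps the issue entirely by making $b$ the explicit minimum-welfare CCE and making ``beat $b$'' literally equivalent to the underlying decision problem. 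As a bonus, the paper's $G'$ is a potential game, a structural property that a matching-pennies-based enforcer would not obviously preserve.
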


\begin{proof}
We start with a succinct game $G$ from a class of games in which computing a welfare-maximizing action profile is $\rm{NP}$-hard. Multiple examples of such classes of games are given in~\cite{PR}. We reduce the problem of determining an optimal (welfare maximizing) action profile in $G$ to solving \OP in a modified succinct game $G'$. When $G$ is an $n$-player $m$-action succinct game, we construct a modified game $G'$ by providing an additional action, $b_p$, to each player $p \in [n]$. $G'$ is therefore an $n$-player $(m+1)$-action game. 

Let $A$ denote the set of action profiles of game $G$; similarly, let $A'$ be the action profiles of $G'$. Let $u_p: A \rightarrow [0,1]$ and $u'_p: A' \rightarrow [0,1] $ denote the utility of a player $p$ in $G$ and $G'$, respectively. Along these lines, let $w(\cdot)$ and $w'(\cdot)$ represent the welfare of action profiles in $G$ and $G'$, respectively. 
Note that for every action profile $a' \in A' \setminus A$ there exists at least one player $p$ who is playing the augmented action $b_p$, i.e., $a'_p = b_p$. 

Specifically, we start with the following \rm{NP}-hard problem: given succinct game $G$ and parameter $\OPT$, determine if there exists an action profile $a \in A$ such that $w(a) \geq \OPT$.\footnote{Note that here we are considering an \rm{NP}-hard {\em decision} problem and, hence, parameter $\OPT$ is part of the input.}  The utilities $u_p$ (and hence also $w$) are given as succinct input. Using them we define $u'_p$ as follows:

\begin{enumerate}
\item For every action profile $a \in A$, $u'_p(a) := w(a)/n$. In other words, on action profiles that belong to the original game we construct an identical-interest game. 
\item  For every action profile $a' \in A' \setminus A$ such that in $a'$ there is \emph{exactly} one player $p$ who is playing the augmented action $b_p$ (i.e., $a'_p = b_p$ for exactly one player $p$ and $a'_q \neq b_q$ for all $q \neq p$), set $u'_p(a') := \OPT/n $ and $u'_q(a') := 0$ for all $q \neq p$. %Here \OPT \ denotes the optimal welfare in $G$, $ \OPT = \max_{a \in A} w(a)$.
\item For action profiles $a' \in A' \setminus A$ in which more than one player is playing the augmented action $b_p$, we set 
\[u'_p(a') = \left\{
  \begin{array}{ll}
    \varepsilon/n &  \quad \textrm{ if } a'_p = b_p \\
    0 &  \quad \textrm{ otherwise }
  \end{array}
\right.
\]
Here we select $\varepsilon$ to satisfy: $\OPT > \varepsilon \geq \OPT/n$. 
\end{enumerate}

Note that $G'$ is a succinct game. Specifically, if game $G$ is succinct then, by definition, we have a polynomial-size specification $z$ for $G$. In addition, there exists an algorithm $U$ that takes as input $z$, $p \in [n]$, and $a \in A$, and computes the utility of player $p$ at any action profile $a$, i.e., $u_p(a)$, in polynomial time. Now to obtain a succinct representation for $G'$ we can use $z$ and $U$ (as a subroutine) and compute utilities $u'_p$ for any player $p$ and action profile $a$ in polynomial time.

Say $b$ denotes the action profile wherein each player is playing the augmented action, $b:=(b_1, b_2, \ldots, b_n)$. The definition of $u'_p$ implies that $w'(b) := \sum_{p=1}^n u'_p(b) = \sum_{p=1}^n \varepsilon/n = \varepsilon $.

We will prove that there exists an action profile $a \in A$ (i.e., an action profile in game $G$) with $w(a) \geq \OPT$  \emph{iff} there exists a CCE $x$ in $G'$ that satisfies  $w'(x) > w'(b)$. This shows that determining if there exists a CCE $x$ such that $w'(x) > w'(b)$ is $\rm{NP}$-hard. 

To complete the hardness proof for \OP we will show that action profile $b$ is a pure Nash equilibrium (and, therefore, a CCE), and that no other CCE in $G'$ has welfare $w'$ less than $b$.

Suppose $a$ is an optimal action profile in game $G$, i.e., $a \in A $ and $w(a) \geq \OPT$. Then $a$ is in fact a pure Nash equilibrium in $G'$. This follows from the fact that $u'_p(a) = w(a)/n \geq \OPT/n$ ($G'$ is identical interest on $a \in A$); hence (i) for any possible deviation $\hat{a}_p \neq b_p $ for player $p$ we have $u'_p(a) = w(a)/n \geq w(\hat{a}_p, a_{-p})/n = u'_p(\hat{a}_p, a_{-p})$. The first inequality holds since $a$ is an optimal action profile in $G$; (ii) for deviation $b_p$, note that $u'_p(a) \geq \OPT/n = u'_p(b_p, a_{-p})$. Therefore, no player can benefit (increase $u'_p$) by unilaterally deviating from $a$, thereby proving that $a$ is a pure Nash equilibrium in $G'$. Overall, we get that if there exists an action profile $a \in A$ with $w(a) \geq \OPT$  then there exists a CCE $x$ (in particular, an optimal action profile $a$ itself) in $G'$ that satisfies  $w'(x) > w'(b)$. Recall that $w'(a) \geq \OPT > \varepsilon = w'(b)$. 

It remains to show that if there exists a CCE $x$ such that $w'(x) > w'(b)$ then there exists an action profile $a \in A$ with $w(a) \geq \OPT$. We will consider the set of action profiles in the support of $x$ that are also contained in $A$, i.e., $\textrm{Supp}(x) \cap A$. A useful observation is that for all $a' \in A' \setminus A$ the welfare $w'$ satisfies: $w'(a') \leq w'(b)$ (recall, $\varepsilon \geq \OPT/n$). 
This implies that $\textrm{Supp}(x) \cap A \neq \phi$; otherwise, we would have $w'(x) \leq w'(b)$. Write $\pi>0$ to denote the probability mass of $x$ on the set $\textrm{Supp}(x) \cap A$; specifically, $\pi := \sum_{a \in A} x(a)$.
 
Since $x$ is a CCE, deviating to $b_p$ could not increase any player $p$'s expected utility: 
 %and $S' =  \textrm{Supp}(x) \setminus S$. By definition, $S' \subseteq A' \setminus A$. 
\begin{align*}
w'(x) & = \E_{a \sim x} [ u'_p(a) ] \\ 
& \geq \E_{a \sim x} [u'_p(b_p, a_{-p})].
\end{align*}

We can rewrite the above inequality as follows: $\E_{a \sim x} [ u'_p(a) - u'_p(b_p, a_{-p})] \geq 0$. Next we expand in terms of conditional expectation

\begin{align}
\E_{a' \sim x} \left[ u'_p(a') - u'_p(b_p, a'_{-p})  \mid a \in A' \setminus A \right] \cdot (1-\pi) \ + \ \E_{a \sim x} \left[u'_p(a) - u'_p(b_p, a_{-p}) \mid a \in A \right] \cdot \pi & \geq 0. \label{ineq:cexp}
\end{align}

Note that for any action profile $a' \in A' \setminus A$ and each player $p$ we have \begin{align} u'_p(a') - u'_p(b_p, a'_{-p}) \leq 0.\end{align} Either $a'_p = b_p$, in which case $u'_p(a') - u'_p(b_p, a'_{-p}) = 0$; otherwise, $a'_p \neq b_p$ and then $u'_p(a') = 0 < u'_p(b_p, a'_{-p}) $.

This implies that the term $\E_{a' \sim x} \left[ u'_p(a') - u'_p(b_p, a'_{-p})  \mid a \in A' \setminus A \right] $ in inequality (\ref{ineq:cexp}) is non-positive for every player $p$. Therefore, the second term in  (\ref{ineq:cexp}), $\E_{a \sim x} \left[u'_p(a) - u'_p(b_p, a_{-p}) \mid a \in A \right]$, must be non-negative for every player $p$.  Summing the second term over all players we get: % (\ref{ineq:cexp})
\begin{align}
\ \E_{a \sim x} \left[ \sum_p \left( u'_p(a) - u'_p(b_p, a_{-p}) \right)\mid a \in A \right] \cdot \pi & \geq 0.  \label{ineq:fexp}
\end{align}

%Since $S := \textrm{Supp}(x) \cap A$,

Recall that $\pi > 0$, i.e., there exists an action profile $a \in A$ such that $x(a) > 0$. Therefore, inequality (\ref{ineq:fexp}) and the probabilistic method imply that there exists an action profile $a \in A $ such that $\sum_p \left( u'_p(a) - u'_p(b_p, a_{-p}) \right) \geq 0$. Since $a \in A$, $u'_p(b_p, a_{-p}) = \OPT/n$ for all $p$. Hence, $\sum_p u'_p(a) \geq \sum_p \OPT/n = \OPT$.

Thus, the existence of a CCE $x$ in $G'$ such that $w'(x) > w'(b)$ implies that there exists an action profile $a \in A$  with $w(a) \geq \OPT$.

To complete the proof, we need to show that $b$ is a pure Nash equilibrium and that no other CCE in $G'$ has welfare $w'$ less than $b$. The first part of this claim is direct. To prove the second part, suppose by way of contradiction that there existed a CCE $x'$ in $G'$ such that $w'(x') < w'(b)$. Therefore there would exist a player $p$ such that  
\begin{align}
\E_{a' \sim x'} [ u'_p(a') ] & < w'(b)/n \nonumber \\
& = \varepsilon/n. \label{ineq:switch}
\end{align}
But, note that for any $a'_{-p} \in A'_{-p}$ we have $u'_p(b_p, a'_{-p}) \geq \varepsilon/n$. This observation along with inequality (\ref{ineq:switch}) implies that $p$ would strictly benefit by unilaterally deviating to $b_p$. Therefore, $x'$ cannot be a CCE. This completes the proof. %\katrina{Can you clarify the previous sentence?}
\end{proof}

%\katrina{Where are we talking about CEs as well as CCEs? This seems like a good place for that, no? Sid: Does the following paragraph suffice?}

\noindent 
{\bf Remark:} The proof of Theorem~\ref{thm:op} can be directly adopted to establish hardness for CE as well. In particular, the fact that any CE $x$ satisfies the inequalities that define a CCE (see Definitions~\ref{def:ce} and~\ref{def:cce}) can be used in the previous proof to show that it is \rm{NP}-hard to determine a CE with welfare strictly better than the worst possible CE. 

In addition, we show below that the reduction given in the proof of Theorem~\ref{thm:op} establishes a hardness result for the egalitarian objective as well. 

\begin{theorem}
\label{thm:eo}
In an $n$-player, $m$-action succinct game it is {\rm NP}-hard to determine if there exists a coarse correlated equilibrium $x$ that satisfies $\min_p u'_p(x) > \min_p u'_p(x')$, where $u'_p$ denotes the utility of player $p$ in the given game and $x'$ is the worst equilibrium with respect to the egalitarian objective, i.e., $x' \in \argmin_{x'' \in \textrm{CCE }} \{ \min_p u'_p(x'') \}$. 
\end{theorem}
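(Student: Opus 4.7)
The plan is to reuse verbatim the reduction from the proof of Theorem~\ref{thm:op}: given a succinct game $G$ in which welfare maximization is $\rm{NP}$-hard, construct the augmented $(m+1)$-action game $G'$ with the distinguished action profile $b = (b_1,\dots,b_n)$ and auxiliary value $\varepsilon$ with $\OPT > \varepsilon \ge \OPT/n$. I will show that the following are equivalent and thereby complete the reduction: (i) there is $a \in A$ with $w(a) \ge \OPT$; (ii) there exists a CCE $x$ of $G'$ with $\min_p u'_p(x) > \min_p u'_p(x')$, where $x'$ is a CCE minimizing the egalitarian objective.

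First I would establish that $b$ attains the worst egalitarian value among all CCE of $G'$, and in fact that this value is exactly $\varepsilon/n$. Since $b$ is a pure Nash equilibrium (already shown in Theorem~\ref{thm:op}) with $u'_p(b) = \varepsilon/n$ for every $p$, it is a CCE with $\min_p u'_p(b) = \varepsilon/n$. Conversely, for any CCE $x$ of $G'$ and any player $p$, observe that for every profile $a'_{-p} \in A'_{-p}$ the deviation utility satisfies $u'_p(b_p, a'_{-p}) \ge \varepsilon/n$ (it equals $\OPT/n$ if $p$ is the unique $b_\cdot$-player and $\varepsilon/n \le \OPT/n$ otherwise). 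The CCE condition then gives
\[
u'_p(x) \;=\; \E_{a' \sim x}[u'_p(a')] \;\ge\; \E_{a' \sim x}[u'_p(b_p, a'_{-p})] \;\ge\; \varepsilon/n,
\]
so $\min_p u'_p(x) \ge \varepsilon/n = \min_p u'_p(b)$, confirming that $b$ is a worst CCE under the egalitarian criterion.

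Next I would handle the two directions of the equivalence. For the forward direction, if $a \in A$ satisfies $w(a) \ge \OPT$ then (as in Theorem~\ref{thm:op}) $a$ is a pure Nash equilibrium of $G'$ and hence a CCE, with $u'_p(a) = w(a)/n \ge \OPT/n$ for every $p$. Because $\OPT > \varepsilon$, we obtain $\min_p u'_p(a) \ge \OPT/n > \varepsilon/n = \min_p u'_p(x')$, as required. For the backward direction, suppose a CCE $x$ of $G'$ satisfies $\min_p u'_p(x) > \varepsilon/n$. Summing over $p$ yields
\[
w'(x) \;=\; \sum_{p=1}^n u'_p(x) \;>\; \sum_{p=1}^n \varepsilon/n \;=\; \varepsilon \;=\; w'(b),
\]
so $x$ strictly improves on $b$ in welfare, and the argument from the proof of Theorem~\ref{thm:op} (the chain of inequalities~(\ref{ineq:cexp})--(\ref{ineq:fexp}) plus the probabilistic method) produces an $a \in A$ with $w(a) \ge \OPT$.

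The main obstacle, and really the only new ingredient, is the lemma that $b$ minimizes the egalitarian objective among CCE; everything else is a translation of Theorem~\ref{thm:op}. That lemma is essentially free in this construction because $b_p$ is a deviation that always guarantees player $p$ at least $\varepsilon/n$ irrespective of the other players' (possibly correlated) randomization, so the CCE incentive constraints force the per-player expected utility floor of $\varepsilon/n$ that $b$ exactly matches. Once this observation is in place, the welfare-side proof of Theorem~\ref{thm:op} transfers to the egalitarian setting with no further modifications.
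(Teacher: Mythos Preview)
Your proposal is correct and follows essentially the same route as the paper: reuse the construction of $G'$ from Theorem~\ref{thm:op}, verify that $b$ realizes the minimum egalitarian value among CCE via the deviation-to-$b_p$ lower bound $u'_p(x)\ge \varepsilon/n$, and then reduce both directions to the welfare argument already established there. The only cosmetic difference is that you prove the minimality of $b$ directly (every CCE has $\min_p u'_p(x)\ge \varepsilon/n$), whereas the paper phrases the same observation as a contradiction; the content is identical.
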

%\katrina{Why the switch to $v$ rather than $u$ above? Sid: u is in the original game u prime is in the modified game wherein we show it is hard to find a good CCE.}

\begin{proof}[Sketch] Here we use the same notation as in the proof of Theorem~\ref{thm:op}. Also, as in the previous proof, we obtain a reduction from the following \rm{NP}-hard problem: given succinct game $G$, determine if there exists an action profile $a \in A$ such that $w(a) \geq \OPT$. 

Note that the action profile $b$ in the constructed game $G'$ is the worst equilibrium with respect to the egalitarian objective, i.e., $ b \in \argmin_{x'' \in \textrm{CCE }} \{ \min_p u'_p(x'') \}$. We can establish this fact by contradiction. In particular, if there existed a CCE $x'$ such that $\min_p u'_p(x') < \min_p u'_p(b) = \varepsilon/ n$, then the player $p$ that obtains the minimum utility under $x'$ could benefit by unilaterally deviating to $b_p$, contradicting the assumption that $x'$ is a CCE.

To prove this theorem we show that the original game $G$ has an action profile $a$ with $w(a) \geq \OPT$ iff there exists a CCE $x$ such that $\min_p u'_p(x) > \min_p u'_p(b)$. The forward direction follows from the fact that an optimal action profile $a$ with welfare at least $\OPT$ is a pure Nash equilibrium in $G'$. To establish the reverse direction we note that $u'_p(b) = \varepsilon/n$ for all $p$. Hence if a CCE $x$ satisfies $\min_p u'_p(x) > \min_p u'_p(b)$, then its welfare $w'(x)$ is strictly greater than $\varepsilon$. In other words, $w'(x) > \varepsilon = w'(b)$. But, as shown in the previous proof, this strict inequality suffices to establish the existence of an action profile for which $w(a) \geq \OPT$. Hence, we get the desired claim. 
\end{proof}

\noindent 
{\bf Remark:} The reduction detailed above also proves that there does not exist a polynomial-time algorithm that computes a Pareto-efficient CCE, unless \rm{P=NP}. We can establish this result by noting that a polynomial time algorithm, say $\A$, that computes any Pareto-efficient CCE can be used to determine whether there exists an action profile $a$ that satisfies $w(a) \geq \OPT$; as before, this suffices to prove the hardness result. 

If $\A$ returns $b$ as a Pareto-efficient equilibrium then we know that there does not exist an action profile $a$ such that $w(a) \geq \OPT$, since such an action profile would Pareto dominate $b$ in $G'$:  $u'_p(a) > u'_p(b) $ for all $p$. Also, note that if $\A$ returns a Pareto-efficient CCE $x$ such that $u'_p(x) = u'_p(b)$ for all $p$, then again we get that $b$ is Pareto-efficient. So this case is subsumed in the first one. Recall that every CCE $x$ of $G'$ satisfies $u'_p(x) \geq u'_p(b)$. Therefore, the final case entails $\A$ returning a CCE $x$ such that for some $p$ we have $u'_p(x) > u'_p(b) $. Hence, we get that $w'(x) > w'(b)$, which again implies the existence of an action profile $a$ with welfare $w(a) \geq \OPT$. \\

\noindent %In particular, the constructed game $G'$ is a potential game.

{\bf Remark:} Theorems~\ref{thm:op} and~\ref{thm:eo} hold for potential games. This follows from the fact that the reduction used in the proof of these theorems in fact gives us a potential game. Specifically, a potential function $\phi$ for the constructed game $G'$ is as follows: 
\begin{enumerate}
\item $\phi(a) := w(a)/n$ for all $a \in A$.
%\item For all action profiles $a \in A' \setminus A$ in which exactly one player, say $p$, is playing its augmented action, $b_p$, we set $\phi(a) = \OPT/n$. 
\item For all action profiles $a \in A' \setminus A$ (i.e., in $a$ at least one player is playing is playing its augmented action $b_p$), we set 
\[ \phi(a) := \frac{\OPT}{n} + \frac{(k-1)\varepsilon}{n}. \] 
Here $k$ is the number of players playing their corresponding augmented action $b_p$ in action profile $a$, $k = \left| \{ p \mid a_p = b_p \} \right|$.

\end{enumerate}

A case analysis shows that $\phi$ is a potential function for $G'$. % of the utility change $u'_p(a_p, a_{-p}) - u'_p(a'_p,a_{-p})$, i.e., analysis of the utility change observed by any player $p$ via unilateral deviation. 
In particular, we will show that the following equality holds for each player $p$ and action profiles $(a_p, a_{-p})$ and $(a'_p, a_{-p})$:
\begin{align}
u'_p(a_p, a_{-p}) - u'_p(a'_p,a_{-p}) & = \phi(a_p, a_{-p}) - \phi(a'_p,a_{-p}) \label{eq:pot}
\end{align}

\begin{itemize}
\item[] Case \rm{I}: Both $(a_p, a_{-p})$ and $(a'_p, a_{-p})$ are action profiles in $A$. Here we have $u'_p(a_p, a_{-p}) = w(a_p, a_{-p})/n = \phi(a_p, a_{-p}) $ and $u'_p(a'_p, a_{-p}) = w(a'_p, a_{-p})/n = \phi(a'_p, a_{-p})$. Hence, in this case (\ref{eq:pot}) holds. \\

\item[] Case \rm{II}: Action profile $(a_p, a_{-p}) \in A$ and  $(a'_p, a_{-p}) \notin A$ (i.e., $a'_p = b_p$). Again, following the definitions of utility $u'_p$ and potential function $\phi$ we get the equality (\ref{eq:pot}):  $u'_p(a_p, a_{-p}) = w(a_p, a_{-p})/n = \phi(a_p, a_{-p}) $ along with $u'_p(a'_p, a_{-p}) = \OPT/n = \phi(a'_p, a_{-p})$. The symmetric case of $(a_p, a_{-p}) \notin A$ and  $(a'_p, a_{-p}) \in A$ is similarly addressed. \\

\item[] Case \rm{III}: Both action profiles $(a_p, a_{-p})$ and $(a'_p, a_{-p})$ are not in $A$. If neither $a_p$ nor $a'_p$ is equal to $b_p$ the utility $u'_p$ is zero under both the action profiles. Also, the number of players playing their respective augmented actions $b_q$ is the same in $(a_p, a_{-p})$ and $(a'_p, a_{-p})$, hence $\phi(a_p, a_{-p}) = \phi(a'_p, a_{-p})$. This enforces equality (\ref{eq:pot}). 

Now we consider the setting in which exactly one of $a_p$ or $a'_p$ is equal to $b_p$; say $a_p = b_p$ (the other possibility (i.e., $a'_p = b_p$) holds by  symmetry). Here, $u'_p(a_p, a_{-p}) =  \varepsilon/n$ and $u'_p(a'_p, a_{-p}) = 0$. Say $k \in [n]$ is the number of players playing their corresponding augmented action in action profile $(a_p, a_{-p})$, then  $\phi(a_p, a_{-p}) =\OPT/n + (k-1) \varepsilon/n $ and $\phi(a'_p, a_{-p}) = \OPT/n + (k-2) \varepsilon/n$. Therefore, again, (\ref{eq:pot}) holds.
\end{itemize}

% is a pure Nash equilibrium in $G'$.\footnote{If no such action profile exists then we did not begin with the right value of \OPT. This can be rectified by a binary search over \OPT. } Hence, there exists a CCE (namely, $a$) 

\subsection{Approximate Coarse Correlated Equilibrium}
This section establishes the hardness of computing an {\em approximate} CCE that has high social welfare. Specifically, we consider the problem of computing a $\frac{1}{2n^3}$-CCE with welfare $(1 + \frac{1}{n})$ times better than the welfare of the worst CCE. Note that there exist regret-based dynamics (c.f~\cite{young2004strategic}) that converge to the set of $\varepsilon$-CCE in time polynomial in $1/\varepsilon$. Therefore, in polynomial time we can compute \emph{a} $\frac{1}{2n^3}$-CCE. But, as the following theorem shows, it is unlikely that we can efficiently find a $\frac{1}{2n^3}$-CCE with any nontrivial welfare guarantee.  

%\katrina{Again, somewhere we have to talk about CE?? Sid: Kindly proofread the following blurb}
 
Note that in an $n$-player $m$-action game a $\frac{1}{2n^3m}$-CE is guaranteed to be a $\frac{1}{2n^3}$-CCE (see Definitions~\ref{def:eps-CE} and~\ref{def:approxCCE}). Using this fact, one can directly use the proof given in this section to show that, under standard complexity-theoretic assumptions, there does not exist a polynomial time algorithm that determines a $\frac{1}{2n^3m}$-CE with any nontrivial welfare guarantee in succinct multiplayer games. It is worth pointing out that in multiplayer games we can always find {\em a} $\frac{1}{2n^3m}$-CE in polynomial time (c.f~\cite{young2004strategic}). 

\begin{definition}[\AOP]
Let $\Gamma$ be an $n$-player $m$-action succinct game. \AOP \ is defined to be the problem of determining whether there exists a $\frac{1}{2n^3}$-CCE $x$ in $\Gamma$ such that $w(x) \geq (1 + \frac{1}{n}) w(x')$, where $x'$ denotes the worst CCE of $\Gamma$, in terms of social welfare $w$. 
\end{definition}

%We can also establish hardness for a parameterized version of \AOP. In particular, we can show that for any $\delta \in \left[ \frac{1}{\textrm{poly}(n)}, 1 \right]$ it is computationally hard to compute a $\frac{\delta}{2n^2}$-CCE with welfare greater than $\left(1 + \delta \right) w(x')$. Again, $x'$ denotes the worst CCE. For ease of presentation we work with \AOP \ as defined, i.e., prove the result for $\delta = 1/n$.

%\katrina{The above paragraphs in this subsection feel a bit disorganized\ldots we consider this, but we prove that, but we actually prove that. Maybe we can reorganize a bit?} % MOVED AFTER THE THEOREM

\begin{theorem}
\label{thm:aop}
In succinct multiplayer games, \AOP \ is $\rm{NP}$-hard under randomized reductions: if \AOP \ admits a polynomial-time algorithm then {\rm NP} admits a polynomial-time randomized algorithm. 
\end{theorem}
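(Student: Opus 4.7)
The plan is to re-use the reduction of Theorem~\ref{thm:op} but to start from a \emph{gap} version of succinct welfare maximization rather than its exact decision version. Specifically, I would begin from a promise problem of the form ``$\max_a w(a) \ge T$'' versus ``$\max_a w(a) \le (1-\delta) T$'' on a succinct class of games, with $\delta$ a positive constant and $T = \Theta(n)$. Such gap problems are $\rm{NP}$-hard under randomized reductions (for instance, by feeding a PCP/parallel-repetition amplification into the hardness constructions of~\cite{PR} for graphical, polymatrix, or congestion games); the randomization in the theorem statement is inherited from this step. Given such an instance, build $G'$ verbatim as in the proof of Theorem~\ref{thm:op}, setting $\varepsilon := T/n$ so that $w'(b) = \varepsilon = \Theta(1)$.

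The forward direction is unchanged: in a ``yes'' instance, an optimal action profile $a \in A$ is a pure Nash equilibrium of $G'$ (hence a $\tfrac{1}{2n^3}$-CCE), and $w'(a) = w(a) \ge T = n\,w'(b) \ge (1+\tfrac{1}{n})w'(b)$.

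The new work lies in the reverse direction. Given a $\tfrac{1}{2n^3}$-CCE $x$ of $G'$ with $w'(x) \ge (1+\tfrac{1}{n})w'(b)$, I would first lower bound the mass $\pi := \sum_{a\in A} x(a)$. Using $w'(a') \le \varepsilon$ for every $a' \in A'\setminus A$ and $w'(a) \le n$ for every $a \in A$, the welfare inequality forces $\pi n + (1-\pi)\varepsilon \ge (1+\tfrac{1}{n})\varepsilon$, which rearranges to $\pi = \Omega(1/n^2)$. I would then repeat the conditional-expectation argument from the proof of Theorem~\ref{thm:op}: apply the $\tfrac{1}{2n^3}$-CCE inequality for deviation $b_p$ for each player $p$, sum over $p$, and split the expectation by $A$ versus $A'\setminus A$. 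The only quantitative change from the exact case is that the summed right-hand side becomes $-\tfrac{1}{2n^2}$ rather than $0$; the $A'\setminus A$ conditional term remains non-positive by the same case analysis on who plays $b_p$ as before. The resulting bound is
\[
\E_{a \sim x}\!\left[\, w'(a) \,\middle|\, a \in A \,\right] \;\ge\; T \;-\; \frac{1}{2 n^2 \pi} \;=\; T - O(1).
\]
By the probabilistic method there exists $a \in A$ with $w(a) = w'(a) \ge T - O(1)$, which contradicts the ``no'' promise $\max_a w(a) \le (1-\delta) T$ whenever $\delta T = \omega(1)$---immediate since $T = \Theta(n)$ and $\delta$ is a positive constant.

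The main obstacle is not the CCE analysis, which is a quantitative reprise of Theorem~\ref{thm:op}, but rather ensuring that the starting inapproximability gap is large enough to absorb the $O(1)$ additive slack introduced by the $\tfrac{1}{2n^3}$ approximation and the $(1+\tfrac{1}{n})$ welfare ratio. The bare $\rm{NP}$-hardness of welfare maximization from~\cite{PR} is insufficient on its own, and producing a constant-factor inapproximability gap for succinct welfare maximization is precisely where randomness enters, accounting for the weakening from $\rm{NP}$-hardness to $\rm{NP}$-hardness under randomized reductions.
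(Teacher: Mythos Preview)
Your CCE-side analysis (lower-bounding $\pi$, the approximate version of inequality~(\ref{ineq:fexp}), and then reading off a good profile) is correct and essentially matches the paper's. However, you have taken a genuinely different route from the paper on the \emph{reduction} side, and in doing so you have misidentified where the randomness in the theorem statement comes from.

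The paper does \emph{not} start from a gap version of welfare maximization. It starts from the very mild assumption that finding an action profile within $1$ of the optimum welfare is $\rm{NP}$-hard, guesses a threshold $\tau\in[\OPT-1,\OPT]$ by exhaustive search, and builds $G'$ with $\varepsilon=(\tau+1)/n$. It then treats an \AOP\ solver as actually \emph{returning} a qualifying $\tfrac{1}{2n^3}$-CCE $x$, carries out the same conditional-expectation computation you did to obtain $\E_{a\sim x}[\,w'(a)\mid a\in A\,]\ge \OPT-\tfrac12$, and then \emph{samples} polynomially many draws from $x$ to extract, with high probability, a concrete profile $a\in A$ with $w(a)\ge \OPT-1$. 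The randomized reduction in the theorem statement is precisely this sampling step; no PCP or gap amplification is invoked anywhere.

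By contrast, your route pushes all the slack into a constant-factor gap on the starting problem and uses only the probabilistic method (an existence statement), which yields a many-one reduction for the \emph{decision} formulation of \AOP. That is a legitimate alternative strategy, and arguably cleaner with respect to the stated definition of \AOP. But your last paragraph is the weak point: the claim that such gap instances of succinct welfare maximization are ``$\rm{NP}$-hard under randomized reductions'' via PCP/parallel repetition is asserted without justification, and in fact standard PCP-based inapproximability results are deterministic. If you can substantiate the gap hardness (with $T=\Theta(n)$ and a constant $\delta$) for some succinct class, your argument would give a \emph{deterministic} reduction---stronger than stated---but would not explain the ``randomized'' qualifier in the theorem. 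As written, the source of randomness you name is not the one the paper uses, and the gap-hardness premise you rely on is left unestablished.
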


\begin{proof}
We will extend the construction presented in the proof of Theorem~\ref{thm:op}. We start with a game $G$ from a class of games in which it is {\rm NP}-hard to compute an action profile with welfare within one of the optimal. That is, in $G$ it is \rm{NP}-hard to compute an action profile $a$ such that $w(a) \geq \max_{a' \in A} w(a')  - 1$; note that this is a fairly modest hardness of approximation requirement. 

%Below we show that if we can solve \AOP \ in polynomial time then with high probability we can find action profile in $G$ with welfare within one of the optimal. This establishes the stated claim. 

%whether the optimal welfare is within one of a give bound $\OPT$. That is, in $G$ is it \rm{NP}-hard to determine if $\max_{a \in A} \in [\OPT-1, \OPT]$. In other words, we consider a class of games wherein a $1$-\emph{additive} approximation is computationally hard; note that this is a fairly modest hardness of approximation requirement. % best achievable approximation ratio---say, under standard complexity theoretic assumptions---for welfare maximization problem is less than $(1-\frac{1}{n})$. 
%is an additive factor of a an action profile $a'$ with welfare $w(a') \geq \OPT -2 $ is $\rm{NP}$-hard; here we set $ \OPT:= \max_{a \in A} \ w(a)$.  That is

Write $\OPT = \max_{a \in A} w(a)$. Below we develop a polynomial-time randomized algorithm that uses an algorithm for \AOP \ to compute an action profile $a$ that satisfies $w(a) \geq \OPT -1$. This establishes the stated claim. 

To find the desired action profile $a$, we need a parameter $\tau$ that satisfies $\tau \in [\OPT -1 , \OPT]$. %But, we can take care of this requirement by trying out all $\tau \in \{0,1,\ldots, n-1\}$. 
Since the utilities in $G$ are normalized between $0$ and $1$, we have $\OPT \leq n$. Therefore, one of the values in $\{0,1,\ldots, n-1\}$ will give $\tau \in [\OPT -1 , \OPT]$, and we can simply search exhaustively. 

%assume that $n > 4$\footnote{Recall that for a constant number of players, an optimal CCE can be determine in polynomial time.} and also work

Applying the same transformations as in the proof of Theorem~\ref{thm:op}, we obtain the succinct game $G'$. While setting utilities in $G'$ we use  $\varepsilon = (\tau +1)/n$, where parameter $\tau \in [\OPT- 1, \OPT]$. Therefore, we have $ \frac{\OPT}{n} \leq \varepsilon \leq \frac{\OPT + 1}{n}$. 

We assume that $\OPT \geq 1$, else finding an action profile $a$ such that $w(a) \geq \OPT -1$ is trivial. Also, we can assume that $n \geq 4$; recall that for a constant number of players, an optimal CCE can be computed in polynomial time. The following inequality holds under these assumptions: $\OPT \geq \left( 1 + \frac{1}{n} \right) \frac{\OPT + 1}{n} $.

As before, the action profile $b$ is a pure Nash equilibrium, and in fact is a CCE with minimum social welfare. %FSay $A$ uses a polynomial-time algorithm for \AOP \ to determine a $\frac{1}{2n^3}$-CCE $x$ in $G'$ such that $w'(x) \geq (1 + \frac{1}{n}) w'(b)$. 

%First, to establish the forward direction of the reduction, we note that if in $G$ we have $\max_{a \in A} \in [\OPT-1, \OPT]$, then an optimal action profile $a$ of $G$ is a pure Nash equilibrium (hence, a $\frac{\delta}{n^2}$-CCE) with welfare $w'(a) = w(a) \geq \OPT -1$.  Using these bounds and the fact that $\delta \leq 1$, we get $w'(a) \geq \OPT/2 > (1 + \delta) \OPT/n = (1 + \delta) w'(b)$. Hence, the containment $\max_{a \in A} \in [\OPT-1, \OPT]$  implies the existence of a $\frac{\delta}{n^2}$-CCE with welfare strictly better than $(1+\delta) w'(b)$.

First, note that an optimal action profile $a^* \in \argmax_{a \in A} w(a)$ of $G$ is a pure Nash equilibrium (hence, a $\frac{1}{2n^3}$-CCE) in $G'$. Also, we have $w'(a^*) = w(a^*) = \OPT$. The bound $w'(a^*) \geq \left(1 + \frac{1}{n}\right) w'(b)$ follows from the following chain of inequalities: $\OPT \geq \left(1 + \frac{1}{n} \right) \frac{\OPT +1}{n} \geq \left(1 + \frac{1}{n}\right) \varepsilon = \left(1 + \frac{1}{n}\right) w'(b)$. Thus we get that there exists a $\frac{1}{2n^3}$-CCE with welfare strictly better than $(1+1/n) w'(b)$. This overall ensures that a polynomial-time algorithm for \AOP \ is guaranteed to return a solution. Next we show that any such returned solution can be used to compute an action profile $a$ that satisfies $w(a) \geq \OPT -1$.

%Hence, the containment $\max_{a \in A} \in [\OPT-1, \OPT]$  implies the existence of a $\frac{\delta}{n^2}$-CCE with welfare strictly better than $(1+\delta) w'(b)$.

%Next we establish the reverse direction of the reduction. 

 %By the definition of utilities $u'_p$, we know that $w'(a') \leq w'(b)$ for all $a' \in A' \setminus A$. Hence there must exist an action profile of $A$ in the the support of $x$: $\textrm{Supp}(x) \cap A \neq \phi$. Write $\pi := \sum_{a \in A} x(a)$. 
%This inequality, in particular, shows that $\pi := \sum_{a \in A} x(a) >0$.  

The fact that $w'(a) \leq w'(b) $ for all $a \in A' \setminus A$ and the inequality $w'(x) \geq (1 + \frac{1}{n}) w'(b)$ imply that $\sum_{a \in A} w'(a) x(a) \geq \frac{1}{n} w'(b)$. Recall that $w'(b) = \varepsilon  \geq \frac{\OPT}{n}$. Therefore, $\sum_{a \in A} w'(a) x(a) \geq \frac{1}{n^2} \OPT$. Since $w(a) = w'(a)$ for all $a \in A$, we have  $\max_{a \in A} \ w'(a) = \OPT$. Therefore, $\pi := \sum_{a \in A} x(a) \geq \frac{1}{n^2}$. 

Given that $x$ is a $\frac{1}{2n^3}$-approximate CCE, analogous to inequality (\ref{ineq:fexp}) here we have 
\begin{align}
\E_{a \sim x} \left[ \sum_p \left( u'_p(a) - u'_p(b_p, a_{-p}) \right)\mid a \in A \right] \cdot \pi & \geq - \frac{1}{2n^2}.  \label{ineq:fexpa}
\end{align}

Since $\pi \geq \frac{1}{n^2}$, inequality (\ref{ineq:fexpa}) implies $\E_{a \sim x} \left[ \sum_p \left( u'_p(a) - u'_p(b_p, a_{-p}) \right)\mid a \in A \right]  \geq - 1/2$.

%Hence, in the support of $x$ there exists an action profile $a \in A$ that satisfies $ \sum_p \left( u'_p(a) - u'_p(b_p, a_{-p}) \right) \geq - 1 $. 
For all $a \in A$ and $p \in [n]$, we have $u'_p(b_p, a_{-p}) = \OPT/n$. Therefore, we get the following bound on the conditional expectation $\E_{a \sim x} \left[ \sum_p  u'_p(a) \mid a \in A \right]  \geq \OPT - 1/2$. For all action profiles $\sum_p u_p'(a) = w'(a) \leq \OPT \leq n$. This implies that in the conditional distribution $\Pr_x( a \mid a \in A)$ the probability mass on action profiles that satisfy $w'(a) \geq \OPT -1$ is at least $\frac{1}{2n}$. 
% (1-q). (OPT -1 ) + q OPT \geq OPT -1/2......   q OPT \geq 1/2   .......since OPT \leq n......we get q is at least 1/2n

Therefore, with high probability, we can obtain an action profile that satisfies $w'(a) \geq \OPT -1$ by drawing polynomially many independent and identically distributed (i.i.d.) samples from the  conditional distribution $\Pr_x( a \mid a \in A)$. Since $\pi = \sum_{a \in A} x(a) \geq  \frac{1}{n^2}$, we can obtain polynomially many i.i.d.\ samples from the conditional distribution by drawing polynomially many i.i.d.\ samples from $x$. This overall gives us a polynomial-time randomized algorithm to find an action profile that satisfies $w(a) = w'(a) \geq \OPT -1$. Hence, the stated claim follows. 
%\begin{align}
%w'(a) & \geq \OPT - 1 %\\ 
%& \geq \OPT - \frac{1}{n}.
%\end{align}
%
% The second inequality follows from the fact that $\delta \leq 1/n^3$. With $\OPT \geq 1 $, we get that $w'(a) \geq (1 - \frac{1}{n}) \OPT$. 
%Given that $\pi \geq \frac{\delta}{n}$. that  Since determining an action profile with welfare $\OPT -1 $ is \rm{NP}-hard in the original game we get
%
 %
\end{proof}

\noindent
In this section we considered approximate CCE with a specific approximation factor, i.e., we established hardness for $\frac{1}{2n^3}$-CCE. This was for  ease of presentation, and in fact hardness of a parameterized version of \AOP \ can be obtained along the lines of the given proof. In particular, we can show that for any $\delta \in \left[ \frac{1}{\textrm{poly}(n)}, 1 \right]$ it is computationally hard to compute a $\frac{\delta}{2n^2}$-CCE with welfare greater than $\left(1 + \delta \right) w(x')$, where, again, $x'$ denotes the worst CCE. 

%\katrina{The above paragraphs in this subsection feel a bit disorganized\ldots we consider this, but we prove that, but we actually prove that. Maybe we can reorganize a bit?.............Sid: Moved this down and rewrote.} 

\section{Computing Approximate Correlated Equilibria with Near-Optimal Welfare}
\label{sect:blackwell}

%\katrina{This section needs an introduction. Also, the switching back and forth between talking about CE and CCE between previous sections and this one is confusing. The following sentence is totally out of the blue...............Sid: Introductory paragraphs follow}

In this section, we develop an algorithmic framework for computing an $\varepsilon$-CE with welfare additively $\varepsilon$ close to the optimal. The ideas presented in the section can be easily modified to find an $\varepsilon$-CCE with welfare additively $\varepsilon$ close to the optimal CCE.\footnote{In order to find an approximate CCE with near-optimal welfare we can
define a different regret vector than the one under consideration in this section, whose components are equal to the regret
terms that appear in the definition of a CCE. Note that  the regret vector
for the CCE case is $nm+1$ dimensional.}
% Moreover, working with this different regret vector we can obtain the desired result for approximate CCE as well.
%\katrina{This is helpful, but maybe not quite enough explanation? Since the best CCE may be much better than the best CE, the generalization feels non-obvious.}

Our framework is based on a novel extension of Blackwell's condition, which is used in the analysis of no-regret algorithms (see, e.g.,~\cite{young2004strategic}). The idea here is to define, for each action profile $a \in A$, a vector $r(a)$ whose components list the regret of each player at action profile $a$. Specifically, for each player $p \in [n]$ and action $j \in [m]$ there is component in $r(a)$ that is equal to $u_p(j,a_{-p}) - u_p(a) $; note that this quantity is the the regret of player $p$ at action profile $a$ with respect to deviation $j$. The regret vector $r(a)$ has an additional component that is equal to the the difference between the optimal welfare and the welfare of action profile $a$. 

Intuitively, the components of $r(a)$ are defined to ensure that $x^*$ is an optimal CE if and only if the following component-wise inequalities hold: $\E_{a \sim x^*} [ r(a) ] \leq 0$. Moreover, to find the desired approximate CE it suffices to determine a distribution $x \in \Delta(A)$ that satisfies $\E_{a \sim x} [ r(a) ] \leq \varepsilon$.  Using an extension of Blackwell's condition (see inequality (\ref{ineq:bwcond})), we develop an algorithm for finding such a distribution $x$. In particular, via a gradient-descent like argument, we show that action profiles satisfying the extended Blackwell condition can be used to determine the desired approximate CE $x$; see proof of Theorem~\ref{thm:graddes} for details. 
%by . Specifically, we construct

It turns out that finding an action profile that satisfies the extended Blackwell condition corresponds to computing an additive approximation of a modified-welfare maximization problem (see Definition~\ref{def:mod-wel}). This overall gives us an algorithmic framework that reduces the problem of determining an approximate CE with near optimal welfare to the problem of additive approximating a modified-welfare maximization problem. We instantiate this framework in the context of \emph{aggregative games} in the next section. 

Formally, we begin by defining a $d = nm(m-1) + 1$ dimensional regret vector $r(a)$ for each action profile $a \in A$. The first $nm(m-1)$ components of $r(a)$ are indexed by triples $(p,i,j)$ for player $p \in [n]$ and distinct actions $i,j \in [m]$. The $(p,i,j)$th component of $r(a)$ is equal to $u_p(j,a_{-p}) - u_p(a) $ if $a_p = i$, and is zero otherwise. That is, the $(p,i,j)$th component is the regret that player $p$ experiences at action profile $a$ by not playing action $j$. The last ($d$th) component of $r(a)$ is equal to $w^* - w(a)$. Here $w^*$ denotes the optimal welfare over the set of correlated equilibria, i.e., $w^* := \max \{  w(x) \mid x \textrm{ is a correlated equilibrium}\}$.

Write $x^*$ to denote the welfare-optimal CE, i.e., $w^* = w(x^*)$. Note that for $x^*$ we have that $\E_{a \sim x^*} [ r(a) ] \leq 0$ holds component-wise . %\katrina{Is there a clearer way to write the previous sentence?......Sid: Does the new introduction of this section clarify this inequality?} 
Now, a useful observation is that for any scaling vector $y \in \mathbb{R}_{+}^d$ with nonnegative components, we have $\E_{a \sim x^*} [ y^T r(a) ] \leq 0$. Via the probabilistic method, we get that for any $y \in \mathbb{R}_{+}^d$ there exists an action profile $a^*$ such that 
\begin{align}
\label{ineq:bwcond}
y^T r(a^*) \leq 0
\end{align}

Inequality (\ref{ineq:bwcond}) can be thought of as an extension of Blackwell's condition. 

%Our algorithmic framework rests on 
This inequality leads us to the objective of maximizing a modified welfare function that is defined as follows.
\begin{definition}[Modified Welfare]
\label{def:mod-wel}
Given scaling vector $y \in \mathbb{R}_{+}^d$ (where the first $nm(m-1)$ components of $y$ are indexed by $(p,i,j)$ for player $p \in [n]$ and actions $i,j \in [m]$ and we refer to the last component of $y$ as $y_d$), we define modified utilities $\tilde{u}^y_p$ and modified welfare $\tilde{w}^y$ as follows:
\begin{align}
\tilde{u}^y_p(a) & := y_d u_p(a) + \sum_{j \in A_p} y_{(p, a_p, j)} (u_p(a) -  u_p(j, a_{-p})) \\
\tilde{w}^y(a) & := \sum_p \tilde{u}^y_p(a).
\end{align}
\end{definition}
%Recall that $d =nm(m-1) +1 $. 
%Here, again we consider a $d$-dimensional scaling vector $y \in  \mathbb{R}_{+}^d$. 
%
%This definition is identical to the one given in~\cite{OPTCE}, even though our primal is different from the formulation given in~\cite{OPTCE}.  For ease of presentation, we simply use the term modified welfare to denote $\tilde{w}^y$, and not ``deviation-adjusted social welfare''. 
%
For ease of presentation, when $y$ is clear from context we will drop it from the superscript of $\tilde{u}_p^y$ and $\tilde{w}^y$.

\begin{definition}[Modified-Welfare Maximization Problem]
\label{def:wel-max}
Given a multi-player game and vector $y \in  \mathbb{R}_{+}^d$, the modified-welfare maximization problem (MWMP) is to compute an action profile $a$ of the game that maximizes modified welfare $\tilde{w}^y$, i.e., the objective is to obtain $\argmax_{a \in A} \tilde{w}^y(a)$.
\end{definition} 

Note that for any vector $y \in \mathbb{R}_{+}^d$ and any action profile $a$ we have $y^T r(a) = y_d w^* - \tilde{w}^y(a)  $. As argued above, for any vector $y$ with non-negative components there exists an action profile $a^*$ that satisfies (\ref{ineq:bwcond}). In particular, $a^*$ satisfies $\tilde{w}^y(a^*) \geq y_d w^*$. Therefore, given a vector $y$, we can compute an action profile that satisfies (\ref{ineq:bwcond}) by solving an instance of MWMP specified via $y$.  Moreover, an $\alpha$-\emph{additive} approximation of MWMP is guaranteed to produce an action profile that satisfies $y^T r(a) \leq \alpha$.

Below we show that (additively) approximating this welfare maximization problem is sufficient to obtain an approximate CE with near-optimal welfare. The hardness result established earlier (see Theorem~\ref{thm:aop}) implies that MWMP cannot be efficiently approximated in general succinct games. However, it is possible for us to approximate MWMP in specific classes of games; in particular, the next subsection details an efficient algorithm to approximate MWMP in \emph{aggregative games}.

Specifically, given a game and vector $y \in \mathbb{R}_+^d$, write $\M(y)$ to denote an $O\left( \frac{\varepsilon^4}{ n^4m^8}  \right)$-additive approximation for MWMP with respect to the specified $y$. Here, $\varepsilon$ is an approximation parameter. Note that an additive approximation $a = \M(y)$ satisfies  $y^T r(a) \leq O \left( \frac{\varepsilon^4}{ n^4m^8} \right)$.

%Write $\mathcal{A}$ to denote an algorithm that, given a vector $y \in [0,b]^d$, determines an action profile $a$ that satisfies $y^T r(a) \leq 1/(nm)^4$. As argued above, for any vector $y$ with non-negative components such an action profile always exists. Moreover, the theorem below shows that one cannot expect such an algorithm $\mathcal{A}$ to run in polynomial time in general games, since a polynomial number of applications of $\mathcal{A}$ suffices to find an approximate CE with near-optimal welfare.  

Our algorithm, $\A$, for computing an approximate CE is given below. $\A$ requires access to an additive approximation $M(y)$ for polynomially many $y$s. Note that the $y$s considered during $\A$'s execution satisfy $ y \in [0,n]^d$. %\katrina{Can you rewrite the previous sentence?} %SUCH A PROPERTY CANNOT BE ENFORCED IN A METHOD THAT RELIES ON THE ELLIPSOID METHOD.

\begin{algorithm}{{\bf Given:} an algorithm for computing additive approximation $\M(y)$ in an $n$-player $m$-action game; {\bf Return:} $\varepsilon$-correlated  equilibrium of the game with welfare at least $w^* - \varepsilon$.}
\caption*{Algorithm for computing $\varepsilon$-correlated equilibrium with near-optimal welfare}
  \begin{algorithmic}[1]
   \label{alg:grad-des}
   \STATE Set $a^0$ to be an arbitrary action profile of the game and $N = O\left(\frac{n^2m^4}{\varepsilon^2}\right)$.
   \STATE Let $\mathcal{N}$ denote the negative orthant and $\Pi_\mathcal{N} (v)$ denote the Euclidean projection of vector $v$ onto $\mathcal{N}$.
\STATE Initialize average regret vector $\bar{r}_0 = r(a^0)$.
   \FOR{ $t=1$ to $N$ } 
   \STATE Set $y = \bar{r}_{t-1} - \Pi_\mathcal{N} (\bar{r}_{t-1} )$.
   \COMMENT{Note that the components of $y$ are nonnegative and their magnitude is no more than $n$}
   \STATE Set $a^t = \M(y)$.
   \COMMENT{Note that $a^t$ satisfies $y^T r(a^t) \leq O \left( \frac{\varepsilon^4}{ n^4m^8} \right) = O\left(\frac{1}{N^2}\right) $} \label{step:inprod} 
\STATE Set $\bar{r}_t = \frac{t}{t+1} \bar{r}_{t-1} + \frac{1}{t+1} r(a^t) $. \label{step:avgrg}
   \ENDFOR
   \STATE Return the empirical distribution over the multiset $\{a^0, a^1, a^2, \ldots, a^N\}$.
    \end{algorithmic}
\end{algorithm}

\begin{theorem}
\label{thm:graddes}
For a given $n$-player $m$-action game, algorithm $\A$ computes an $\varepsilon$-correlated equilibrium with welfare at least $w^* -\varepsilon$. Here  $w^*$ denotes the optimal welfare over the set of correlated equilibria of the given game. Moreover, if in the given game additive approximations $\M(y)$ for $y \in [0,n]^d$ can be computed in polynomial (in $n$, $m$, and $1/\varepsilon$) time, then $\A$ runs in polynomial time as well.
\end{theorem}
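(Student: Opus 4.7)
The plan is to interpret $\A$ as a Blackwell-approachability-style descent procedure that drives the running regret vector $\bar{r}_t$ toward the negative orthant $\mathcal{N}$. Let $D_t := \|\bar{r}_t - \Pi_\mathcal{N}(\bar{r}_t)\|$. Once I establish $D_N \leq \varepsilon$, the conclusion is immediate: every coordinate of $\bar{r}_N$ must then be at most $\varepsilon$, and by construction these coordinates are precisely (i) the $\varepsilon$-CE regret inequalities for the empirical distribution $x$ over $\{a^0,\ldots,a^N\}$, since the $(p,i,j)$ coordinate equals $\sum_{a : a_p = i} x(a)(u_p(j,a_{-p}) - u_p(a))$, and (ii) the welfare gap $w^* - w(x)$. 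So $x$ is an $\varepsilon$-CE whose welfare is at least $w^* - \varepsilon$.

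To bound $D_t$ I would expand the update rule in Step~\ref{step:avgrg} and use nonexpansiveness of $\Pi_\mathcal{N}$ to write $D_t^2 \leq \|\bar{r}_t - \Pi_\mathcal{N}(\bar{r}_{t-1})\|^2$. Setting $y_t := \bar{r}_{t-1} - \Pi_\mathcal{N}(\bar{r}_{t-1}) \in \mathbb{R}_+^d$, a direct expansion gives
\begin{align*}
(t+1)^2 D_t^2 \;\leq\; t^2 \|y_t\|^2 \;+\; 2t\, y_t^T\bigl(r(a^t) - \Pi_\mathcal{N}(\bar{r}_{t-1})\bigr) \;+\; \|r(a^t) - \Pi_\mathcal{N}(\bar{r}_{t-1})\|^2.
\end{align*}
The key step is the complementary-slackness identity for projection onto the negative orthant, $y_t^T \Pi_\mathcal{N}(\bar{r}_{t-1}) = 0$: on each coordinate, either the positive residual $y_t^j$ vanishes or the negative part $[\Pi_\mathcal{N}(\bar{r}_{t-1})]^j$ does. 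Therefore the cross term reduces to $2t\, y_t^T r(a^t)$, which by the MWMP approximation guarantee announced in Step~\ref{step:inprod} is at most $2t \cdot O(1/N^2)$. Using $\|y_t\|^2 = D_{t-1}^2$ and that every coordinate of $r(a^t)$ and $\Pi_\mathcal{N}(\bar{r}_{t-1})$ lies in $[-n,n]$ (the first $nm(m-1)$ coordinates are in $[-1,1]$ while the welfare coordinate is in $[0,n]$), the last term is $O(nm^2 + n^2)$. Telescoping the recursion $(t+1)^2 D_t^2 - t^2 D_{t-1}^2 \leq 2t\cdot O(1/N^2) + O(nm^2 + n^2)$ from $t=1$ to $N$ yields $(N+1)^2 D_N^2 \leq D_0^2 + O(1) + N \cdot O(nm^2+n^2)$, and hence $D_N^2 = O\bigl((nm^2+n^2)/N\bigr) = O(\varepsilon^2)$ for the stated choice $N = \Theta(n^2m^4/\varepsilon^2)$.

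The main obstacle I expect is carefully propagating the additive slack $O(\varepsilon^4/(n^4 m^8))$ from the MWMP oracle through the Blackwell recursion: anything asymptotically weaker than $O(1/N^2)$ per step would destroy the cancellation of the cross term across the sum and leave $D_N^2$ as a nonvanishing constant, so the quantitative match between $N$ and the oracle's precision is essential. The polynomial-runtime assertion then follows: $N$ and $d$ are polynomial in $n,m,1/\varepsilon$, each iteration performs one call to $\M(y)$ plus a coordinate-wise truncation to compute $\Pi_\mathcal{N}$, and the output is an explicitly stored distribution of support $N+1$. A minor technical point is that $r(a^t)$ depends on $w^*$, which is unknown; this can be handled by running $\A$ on each of the $O(n/\varepsilon)$ guesses of $w^*$ in an $\varepsilon$-spaced grid over $[0,n]$ and returning the best valid outcome, which preserves the polynomial-time claim.
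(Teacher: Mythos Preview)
Your proposal is correct and follows essentially the same Blackwell-approachability descent as the paper: both expand $(t+1)^2 D_t^2$ via the update rule, use the orthogonality $y_t^T \Pi_\mathcal{N}(\bar{r}_{t-1})=0$ to reduce the cross term to $y_t^T r(a^t)$, bound it by the oracle's $O(1/N^2)$ slack, and telescope. Your norm bound $\|r(a)\|^2 = O(nm^2+n^2)$ is in fact tighter than the paper's $O(n^2m^4)$, and your closing remark that $w^*$ is unknown and must be handled by an $\varepsilon$-grid search over $[0,n]$ is a point the paper's proof does not address.
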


\begin{proof}
First we establish the stated running-time bound for Algorithm $\A$. Note that $\A$ iterates $N = O\left(\frac{n^2m^4}{\varepsilon^2}\right)  $ times. Therefore, if additive approximations $\M(y)$ can be computed in polynomial time, then $\A$ runs in polynomial time as well. 

Next we establish that $\A$ computes an approximate correlated equilibrium with high welfare. Write $x$ to denote the distribution returned by $\A$, i.e., $x$ is the empirical distribution over the multiset of action profiles $\{a^0, a^1, a^2, \ldots, a^N\}$. We will show that $x$ satisfies $\E_{a \sim x} [ r(a) ] \leq \varepsilon$, component-wise. This inequality and the definition of regret vector $r(a)$ imply that $x$ is an $\varepsilon$-correlated equilibrium with welfare at least $w^* - \varepsilon$. 

Note that Step (\ref{step:avgrg}) of algorithm $\A$ ensures that $\bar{r}_N = \sum_{t=1}^N \frac{1}{N} r(a^t) = \E_{a \sim x} [ r(a) ] $. Here, the second equality follows from the fact that $x$ is the empirical distribution over action profiles $a^1, a^2, \ldots, a^N$.

Write $d(r, \mathcal{N})$ to denote the Euclidean distance between vector $r$ and the negative orthant $\mathcal{N}$. The proof proceeds by showing that $d(\bar{r}_N, \mathcal{N})$  is no more than $\varepsilon$. This implies that component-wise $\bar{r}_N$ is no more than $\varepsilon$, and hence we get the desired claim $\E_{a \sim x} [ r(a) ] \leq \varepsilon$. Recall that   $\bar{r}_{t-1}$ denotes the average regret vector considered in the $(t-1)$th iteration of the algorithm and $\Pi_\mathcal{N}( \bar{r}_{t-1})$ denotes the Euclidean projection of this vector onto the negative orthant. The vector $\Pi_\mathcal{N}( \bar{r}_{t-1})$ is found by replacing the positive components of $\bar{r}_{t-1}$ by $0$, i.e., the $i$th component of the projection $(\Pi_\mathcal{N}( \bar{r}_{t-1}))_i $ is equal to $\min\{ 0, (\bar{r}_{t-1})_i \}$. We bound the Euclidean distance of $\bar{r}_{t}$ from the negative orthant as follows:
\begin{align}
d^2(\bar{r}_{t}, \mathcal{N}) & \leq  d^2(\bar{r}_{t} , \Pi_\mathcal{N} (\bar{r}_{t-1}))  \nonumber \\
& =  \left\| \frac{t}{t+1}\bar{r}_{t-1} + \frac{1}{t+1} r(a^t) - \Pi_\mathcal{N} (\bar{r}_{t-1})\right\|_2^2  \nonumber \\
& = \left(\frac{t}{t+1}\right)^2 \| \bar{r}_{t-1} - \Pi_\mathcal{N} (\bar{r}_{t-1}) \|_2^2 + \left(\frac{1}{t+1}\right)^2 \| r(a^t) - \Pi_\mathcal{N} (\bar{r}_{t-1}) \|_2^2 \nonumber \\ & \qquad + \frac{2t}{t+1} \left(\bar{r}_{t-1} - \Pi_\mathcal{N} (\bar{r}_{t-1})  \right)^T \left(r(a^t) - \Pi_\mathcal{N} (\bar{r}_{t-1})  \right)   \label{eq:interim}
\end{align}

Next we bound the terms on the right-hand side of equality (\ref{eq:interim}).  The fact that the utilities of the players are between $0$ and $1$ implies that for any action profile $a$ the regret vector satisfies $\| r(a) \|_2^2 \leq 2 n^2m^4$. Also, $\| \bar{r}_{t-1} \|_2^2 \leq 2n^2m^4$, since $\bar{r}_{t-1}$ is an average of regret vectors. Therefore, using the triangle inequality, we get the following bound for the second term in (\ref{eq:interim}), $\left(\frac{1}{t+1}\right)^2 \| r(a^t) - \Pi_\mathcal{N} (\bar{r}_{t-1}) \|_2^2  \leq \left(\frac{2 nm^2}{t+1}\right)^2  $.

Step (\ref{step:inprod}) ensures that $\left(\bar{r}_{t-1} - \Pi_\mathcal{N} (\bar{r}_{t-1})  \right)^T r(a^t)$  is no more than $O \left( \frac{1}{N^2} \right)$. In addition, note that  the nonzero components of vector $\bar{r}_{t-1} - \Pi_\mathcal{N} (\bar{r}_{t-1}) $ are the positive components of vector $\bar{r}_{t-1}$, and on the other hand the nonzero components of vector $ \Pi_\mathcal{N} (\bar{r}_{t-1}) $ are the negative components of $\bar{r}_{t-1}$. Therefore, we have $\left(\bar{r}_{t-1} - \Pi_\mathcal{N} (\bar{r}_{t-1})  \right)^T \Pi_\mathcal{N} (\bar{r}_{t-1}) = 0 $. Overall, we get the following bound on the third term in (\ref{eq:interim}): 
\begin{align*}
\frac{2t}{t+1} \left(\bar{r}_{t-1} - \Pi_\mathcal{N} (\bar{r}_{t-1})  \right)^T \left(r(a^t) - \Pi_\mathcal{N} (\bar{r}_{t-1})  \right)  & \leq O \left( \frac{1}{N^2} \right)  \\
& \leq \frac{1}{(t+1)^2} 
\end{align*}

Using the bounds mentioned above and multiplying equation (\ref{eq:interim}) by $(t+1)^2$ we get 
\begin{align*}
%\label{ineq:tele}
(t+1)^2 d^2(\bar{r}_{t}, \mathcal{N}) & \leq t^2 d^2 (\bar{r}_{t-1}, \mathcal{N}) + O(n^2 m^4).  
\end{align*}
%Inequality (\ref{ineq:tele}) 
This
leads to a telescoping sum for $1 \leq t \leq N$ that overall gives us
\begin{align}
\label{ineq:bnd}
N^2 d^2(\bar{r}_{N}, \mathcal{N}) & \leq d^2 ( \bar{r}_1, \mathcal{N}) + O(n^2 m^4 N).
\end{align} 

Note that $\| \bar{r}_1 \|_2^2 \leq O(n^2 m^4) $, therefore $d^2 ( \bar{r}_1, \mathcal{N}) \leq O(n^2 m^4) $. Hence, inequality (\ref{ineq:bnd}) gives $N^2 d^2(\bar{r}_{N}, \mathcal{N}) \leq O(n^2 m^4 N)  $. In other words, $ d (\bar{r}_{N}, \mathcal{N}) \leq O(n m^2 / \sqrt{N})$.
Given that $N = O\left(\frac{n^2m^4}{\varepsilon^2}\right)$, we get that the Euclidean distance between $\bar{r}_N$ and the negative orthant is at most $\varepsilon$, i.e., $d (\bar{r}_{N}, \mathcal{N})  \leq \varepsilon$.

As discussed above, the last inequality implies that $\E_{a \sim x} [ r(a) ] \leq \varepsilon$, where $x$ is the distribution returned by the algorithm $\A$. Overall, following the argument outlined above, we get the desired claim. 
\end{proof}

\noindent
{\bf Remark:} We can adapt this algorithmic framework to the egalitarian objective or Pareto efficiency, instead of welfare maximization. 

For the egalitarian objective, for each action profile, instead of regret vector $r(a)$, we can consider $(nm(m-1) + n)$-dimensional vector $\rho(a)$. The first $nm(m-1)$ components of $\rho(a)$ and $r(a)$ are the same. But, the last $n$ components of $\rho(a)$ are set equal to $w' - u_p(a)$ for each $p \in [n]$. Here $w'$ is the optimal value of the egalitarian objective, $w' := \argmax_{x \in \textrm{CE }} \min_p u_p(x)$. Working with $\rho(a)$ and the corresponding modified-welfare function, we can obtain an $\varepsilon$-CE $x$ that satisfies $\min_p u_p(x) \geq w' - \varepsilon$. 

To find an approximate correlated equilibrium that is nearly Pareto efficient, we pick a specific player $q$ and replace the last component of the regret vector $r(a)$  by $w'' - u_q(a)$. Here $w'' : = \argmax_{x \in \textrm{CE }}  u_q(x)$. In this case, we can consider the relevant  modified-welfare function and overall obtain an $\varepsilon$-CE that satisfies $u_q(x) \geq w'' - \varepsilon$. Since there does not exist a CE wherein the utility of $q$ is $\varepsilon$ more  than $u_q(x)$, we get that $x$ is $\varepsilon$-Pareto efficient.

\subsection{Aggregative Games}
\label{sect:agg-game}
% Aggregative games welfare maximization is NP Hard, reduction from Knapsack.
This section presents a polynomial-time additive-approximation algorithm for MWMP in aggregative games. An $n$-player $m$-action aggregative game with action profiles $A$ is specified by an aggregator function $S: A \rightarrow [-W, W]^k$ and utility-defining functions $v_p: A_p \times [-W,W]^k \rightarrow [0,1]$. The function $S$ serves as a sufficient statistic for the utilities of the player; specifically, the utility of player $p$ at action profile $a$ (i.e., $u_p(a)$) is equal to $v_p(a_p, S(a))$. Note that here the utility depends on the action of the player, $a_p$, and the aggregated vector, $S(a)$. %, and is not explicitly dependent on the actions of the other players. 
In aggregative games, the function $S$ is additively separable; in particular, there exist vectors  $f_p(a_p) \in [-W', W']^k$ for each player $p \in [n]$ and action $a_p \in A_p$ such that for any action profile $a \in A$ we have the following component-wise equality: $S(a) = \sum_{p} f_p(a_p)$.  Here, the dimension $k$ is assumed to be a fixed constant and $W$ and $W'$ are polynomially bounded in $n$ and $m$.

Along the lines of prior work (see~\cite{babichenko2013best, cummings2014privacy}), we consider the setting in which the influence of the aggregator on the utilities is bounded: $|v_p(a_p, s) - v_p(a_p, s') | \leq \| s - s' \|_{\infty} $ for all players $p \in [n]$, actions $a_p \in A_p$, and vectors $s, s' \in [-W, W]^k$. The assumption that $k$ is a fixed constant can be mute without this bounded influence property.  

%, for each player $p$, action $a_p \in A_p$ and aggregated vector $s \in [-W, W]^k$, 
Recall that modified utilities $\tilde{u}_p^y(a)$ are defined in terms of the utilities $u_p(a)$; see Definition~\ref{def:mod-wel}. In this section we will only consider vectors $y$ whose components are linearly bounded, i.e., $y \in [0,n]^d$;  in order to apply Theorem~\ref{thm:graddes} it suffices to consider such linearly-bounded vectors.  

We begin by discretizing the aggregating vectors $f_p(a_p)$ such that their components are multiples of parameter $\delta$, which will be set appropriately. That is, for all $p \in [n]$ and $a_p \in A_p$, the components of vector $f_p(a_p)$ are rounded to the nearest multiple of $\delta$. Note that component-wise the vectors $f_p(a_p)$ are polynomially bounded; hence, a polynomially small $\delta$ ensures that even after discretization for all action profiles $a$, the aggregated value $S(a)$ remains within $ O \left( \frac{ \varepsilon^4 }{poly(n,m)} \right)$---under the $\ell_\infty$ norm---of the original (undiscretized) value. The bounded influence assumption, $|v_p(a_p, s) - v_p(a_p, s') | \leq \| s - s' \|_{\infty} $ and the fact that $y \in [0,n]^d$ ensures that the discretization process does not change the modified utility $\tilde{w}^y(a)$ by more than $ O \left( \frac{ \varepsilon^4 }{poly(n,m)} \right)$, for any action profile $a$. Overall, this implies that (with a polynomially small $\delta$) if we compute an action profile $a'$ that maximizes $\tilde{w}^y$ with the discretized aggregator function then $a'$ is an $ O\left( \frac{ \varepsilon^4 }{n^4m^8} \right)$-additive approximation for MWMP with the original (undiscretized) aggregator function. 

Throughout the remainder of the section we will work with the discretized aggregator. Now, all the discretized vectors $f_p(v_p)$s are contained in $\left\{0, \pm \delta, \pm 2 \delta, \pm 3 \delta, \ldots, \pm \left\lceil \frac{W'}{\delta} \right\rceil \delta \right\}^k$. Write $\mathcal{G}$ to denote the $k$-dimensional grid defined as follows: $\mathcal{G} := \{ \sum_{q= 1}^p f_q(a_q) \mid \textrm{ for all } p \in [n] \textrm{ and each } a_q \in A_q \}$. Since $\delta$ is polynomially small we have $|\mathcal{G}| = \left(\frac{nm}{\varepsilon}\right)^{O(k)}$. Also, for all action profiles $a \in A$, the discretized aggregator function value $S(a)$ is contained in $\mathcal{G}$. 

We develop a dynamic program that works over $\mathcal{G}$ and computes an action profile that maximizes the modified welfare $\tilde{w}^y$. As discussed above, this gives us an $ O\left( \frac{ \varepsilon^4 }{n^4m^8} \right)$-additive approximation for MWMP. 

Our main result of this section is that an additive approximation for MWMP can be computed efficiently when the scaling vector $y$ is contained in  $[0,n]^d$. %, i.e., $\M(y)$, 

\begin{theorem}
\label{thm:agggame}
Given an $n$-player $m$-action aggregative game and a scaling vector $y \in [0,n]^{(nm(m-1) +1)}$, there exists a polynomial-time algorithm that computes an $ O\left( \frac{ \varepsilon^4 }{n^4m^8} \right)$-additive approximation for the MWMP instance specified via $y$.
\end{theorem}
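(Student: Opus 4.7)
The plan is to combine the discretization that the paper has already set up with a dynamic program (DP) that exploits the special structure of aggregative games. The paper has already observed that, after rounding each $f_p(a_p)$ to the nearest multiple of a polynomially small $\delta$, every realizable aggregate $S(a) = \sum_p f_p(a_p)$ lies in a grid $\mathcal{G}$ of size $\left(\tfrac{nm}{\varepsilon}\right)^{O(k)}$, i.e.\ polynomial in $n$, $m$, and $1/\varepsilon$, and that this discretization perturbs $\tilde{w}^y$ by only $O(\varepsilon^4 / \mathrm{poly}(n,m))$. So it suffices to give an exact polynomial-time algorithm for maximizing $\tilde{w}^y$ in the discretized game.

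The key structural observation is that, in an aggregative game, every quantity appearing in $\tilde{u}_p^y(a)$ can be written as a function of $a_p$ and $S(a)$ alone. Indeed, $u_p(a) = v_p(a_p, S(a))$ and $u_p(j, a_{-p}) = v_p\!\bigl(j,\, S(a) - f_p(a_p) + f_p(j)\bigr)$, so defining
\begin{equation*}
\widehat{u}_p^{\,y}(a_p, \sigma) \;:=\; y_d\, v_p(a_p, \sigma) \,+\, \sum_{j \in A_p} y_{(p,a_p,j)}\Bigl(v_p(a_p, \sigma) - v_p\!\bigl(j,\, \sigma - f_p(a_p) + f_p(j)\bigr)\Bigr)
\end{equation*}
we have $\tilde{u}_p^{\,y}(a) = \widehat{u}_p^{\,y}(a_p, S(a))$ for every action profile $a$. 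Thus once the value of $S(a)$ is fixed to some $\sigma \in \mathcal{G}$, the modified welfare decomposes as a sum of per-player terms that depend only on $a_p$.

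This immediately suggests the DP. For each candidate $\sigma \in \mathcal{G}$, I would compute
\begin{equation*}
T_\sigma(p, s) \;:=\; \max \Bigl\{ \sum_{q=1}^{p} \widehat{u}_q^{\,y}(a_q, \sigma) \;:\; a_q \in A_q \text{ for } q \le p,\ \sum_{q=1}^{p} f_q(a_q) = s \Bigr\},
\end{equation*}
with the natural recurrence $T_\sigma(p,s) = \max_{a_p \in A_p} \bigl[ T_\sigma(p-1,\, s - f_p(a_p)) + \widehat{u}_p^{\,y}(a_p, \sigma) \bigr]$ and base case $T_\sigma(0, \mathbf{0}) = 0$. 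The value $T_\sigma(n, \sigma)$ is the optimal modified welfare over profiles whose aggregate equals $\sigma$; maximizing over $\sigma \in \mathcal{G}$ and backtracking through the DP recovers an optimal discretized profile $a^\star$.

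The main obstacle -- and the reason a naive ``fill in players one at a time, carry the running aggregate'' DP does not work -- is that $\tilde{u}_p^{\,y}(a)$ depends on the \emph{final} aggregate $S(a)$, not on any partial prefix sum, so contributions of earlier players cannot be scored until the whole profile is fixed. The trick above circumvents this by guessing the final aggregate $\sigma$ upfront; this costs only a factor $|\mathcal{G}|$, which is polynomial because $k$ is a fixed constant and $\delta$ is polynomially small. The table $T_\sigma(\cdot,\cdot)$ has $n \cdot |\mathcal{G}|$ entries, each filled in $O(m)$ time, and the outer loop over $\sigma$ adds another $|\mathcal{G}|$ factor, giving total time $n m \cdot |\mathcal{G}|^2 = \mathrm{poly}(n,m,1/\varepsilon)$. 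Combining the exact optimum $a^\star$ in the discretized game with the $O(\varepsilon^4/\mathrm{poly}(n,m))$ approximation bound from discretization (using the bounded-influence hypothesis and $y \in [0,n]^d$ to control how rounding in $S$ propagates to $\tilde{w}^y$) yields the claimed $O(\varepsilon^4 / (n^4 m^8))$-additive approximation for MWMP.
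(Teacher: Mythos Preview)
Your proposal is correct and essentially identical to the paper's proof: the paper also guesses the final aggregate $s\in\mathcal{G}$, defines exactly your $\widehat{u}_p^{\,y}(a_p,\sigma)$ (which it calls $\tilde{v}_p^{\,y}(a_p,s)$), observes that $\tilde{w}^y(a)=\sum_p \tilde{v}_p^{\,y}(a_p,s)$ whenever $S(a)=s$, and then fills the same DP table $M(p,s')=T_\sigma(p,s)$ over prefix sums in $\mathcal{G}$ with the same recurrence. The running-time accounting and the appeal to discretization plus bounded influence are the same as well.
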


\begin{proof}
Throughout the proof we work with the modified welfare function $\tilde{w}^y$ that is specified by the discretized aggregator function. In particular, via a dynamic program we will compute an action profile that maximizes $\tilde{w}^y$. As mentioned above, this gives  the desired additive-approximation guarantee. %for the original modified welfare function, which is specified by the undiscretized aggregator function. 

For each vector $s \in \mathcal{G}$ we will maximize the modified welfare $\tilde{w}^y$ function over the set of action profiles $A(s) := \{ a \in A \mid S(a) = s\}$ in polynomial time. Write $a^*$ to denote an action profile that maximizes $\tilde{w}^y$. Discretization ensures that $S(a^*) \in \mathcal{G}$. Also, the fact that the cardinality of $\mathcal{G}$ is polynomially bounded implies that efficiently optimizing over $A(s)$ for each $s \in \mathcal{G}$ gives an action profile that maximizes $\tilde{w}^y$ in polynomial time.

The remainder of the proof details an algorithm that, given vector $s \in \mathcal{G}$, solves the following optimization problem in polynomial time: $\argmax_{a \in A(s)} \tilde{w}^y(a) $.

First, we define modified utility function $\tilde{v}_p(a_p, s)$ in terms of the given functions $v_p(a_p,s)$. Recall that the vector $y$ is $d=nm(m-1) + 1$ dimensional and its first $nm(m-1)$ components are indexed by triples $(p,i,j)$ with $p \in [n]$ and distinct $i,j \in [m]$. 
Using vector $y$ as a parameter we define, 
\begin{align*}
\tilde{v}_p^y(a_p, s) & := y_d \ v_p(a_p,s) +  \sum_{j \in A_p} y_{(p, a_p, j) } \cdot \left[  v_p(a_p,s)  - v_p(j, s - f_p (a_p) + f_p(j) \right]
\end{align*}

A key observation is that for any action profile $a$, if $s = S(a)$ then $\tilde{u}^y_p(a) = \tilde{v}^y_p(a_p,s)$ and, hence, $\tilde{w}^y(a) = \sum_{p=1}^n \tilde{v}^y (a_p, s)$. Therefore, by the definition of $A(s)$, for all $a \in A(s)$ the following equality holds $\tilde{w}^y(a) = \sum_{p=1}^n \tilde{v}^y (a_p, s)$. Hence,  $\argmax_{a \in A(s)} \sum_{p=1}^n \tilde{v}^y (a_p, s)$ is equal to $\argmax_{a \in A(s)} \tilde{w}^y(a)$.

We solve $\argmax_{a \in A(s)} \sum_{p=1}^n \tilde{v}^y (a_p, s)$ via a dynamic program that fills a matrix $M(p, s')$ indexed by $p \in [n]$ and $s' \in \mathcal{G}$. In particular, $M(p, s')$ is set equal to $ \max_{a_1, a_2, \ldots, a_p } \{  \sum_{q=1}^p \tilde{v}^y (a_q, s) \mid \sum_{q=1}^p f_q(a_q) = s' \}$. Here, the entry $M(n, s)$ is equal to the target optimal value $\max_{a \in A(s)} \tilde{w}^y(a)$.
% $\max_{a_1,\ldots, a_p :  \{ \sum_{q=1}^p \tilde{v}^y (a_q, s) \mid a_q \in A_q \textrm{ for } q \in [p] \textrm{ such that } \sum_{q=1}^p f_q(a_q) = s'\} $ 
We can initialize $M(1, s')$ by going over actions in $A_1$; specifically, $M(1, s') = \max_{a_1 }\{ \tilde{v}^y(a_1, s) \mid  f_1(a_1) = s'\}$. In general, we use the recurrence relation $M(p, s') = \max_{a_p, s''} \{  \tilde{v}^y(a_p, s)  + M(p-1, s'') \mid f_p(a_p) + s'' = s' \}$ to complete the matrix. A direct inductive argument proves the correctness of this dynamic program. 

Since, $|\mathcal{G}|$ is polynomially bounded, the size of the matrix is also polynomially bounded. Overall, the dynamic program runs in polynomial time and the stated claim follows.   
\end{proof}

Theorem~\ref{thm:agggame} shows that the additive approximation required in Theorem~\ref{thm:graddes} can be computed in polynomial time. Therefore, the two theorems together imply that in aggregative games an approximate correlated equilibrium with near-optimal  welfare can be computed in polynomial time. 

\begin{corollary}
In $n$-player $m$-action aggregative games an $\varepsilon$-correlated equilibrium with welfare $w^* - \varepsilon$ can be computed in time polynomial in $n$, $m$, and $1/\varepsilon$. Here $w^*$ denotes the optimal welfare over the set of correlated equilibria of the given game.
\end{corollary}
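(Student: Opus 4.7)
The plan is to simply compose the two preceding results: Theorem~\ref{thm:graddes} reduces the task of computing an $\varepsilon$-correlated equilibrium with welfare $w^* - \varepsilon$ to the task of computing, in polynomial time, an $O\left(\frac{\varepsilon^4}{n^4 m^8}\right)$-additive approximation $\M(y)$ for the modified-welfare maximization problem (MWMP) specified by any scaling vector $y \in [0,n]^d$, while Theorem~\ref{thm:agggame} supplies exactly such an approximation algorithm for aggregative games.

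More concretely, I would first invoke Theorem~\ref{thm:agggame} to obtain a subroutine $\M(\cdot)$ that, given any $y \in [0,n]^{nm(m-1)+1}$, returns in time polynomial in $n$, $m$, and $1/\varepsilon$ an action profile that is an $O\left(\frac{\varepsilon^4}{n^4 m^8}\right)$-additive approximation for MWMP with respect to $y$. I would then instantiate algorithm $\A$ (from the proof of Theorem~\ref{thm:graddes}) with this subroutine. Note that during its execution $\A$ only queries $\M$ on vectors of the form $y = \bar{r}_{t-1} - \Pi_{\mathcal{N}}(\bar{r}_{t-1})$, and a direct check shows that these vectors lie in $[0,n]^d$: their components are nonnegative by construction, and since each per-step regret vector $r(a^t)$ has components in $[-1,1]$ (bounded by the maximum utility difference) and the welfare component is bounded by $n$, the average regret vector $\bar{r}_{t-1}$ stays in the range where the positive-part projection has magnitude at most $n$. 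Thus $\M$ is guaranteed to produce a valid approximation on every query.

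By Theorem~\ref{thm:graddes}, algorithm $\A$ then runs for $N = O(n^2 m^4 / \varepsilon^2)$ iterations, each of which performs one call to $\M$ plus polynomial-time vector arithmetic, and returns an $\varepsilon$-correlated equilibrium whose welfare is at least $w^* - \varepsilon$. The overall running time is therefore polynomial in $n$, $m$, and $1/\varepsilon$, which is exactly the claim.

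There is essentially no hard step here; the entire content of the corollary is that the ``bounded $y$'' hypothesis required by $\A$ is compatible with the ``$y \in [0,n]^d$'' hypothesis under which Theorem~\ref{thm:agggame} provides its additive approximation. The only thing worth double-checking is that the sequence of queries generated by $\A$ stays inside this range, which is immediate from the boundedness of utilities and welfare and from the fact that the projection onto the negative orthant only zeros out negative components.
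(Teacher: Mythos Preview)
Your proposal is correct and mirrors exactly what the paper does: it simply composes Theorem~\ref{thm:graddes} with Theorem~\ref{thm:agggame}, with the paper being even terser (it does not spell out the check that the $y$'s queried by $\A$ lie in $[0,n]^d$, leaving that to the comment in the algorithm). One small slip in your final sentence: the projection onto the negative orthant zeros out the \emph{positive} components (not the negative ones), so that $y=\bar r_{t-1}-\Pi_{\mathcal N}(\bar r_{t-1})$ retains precisely the positive part of $\bar r_{t-1}$; this does not affect your argument.
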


%\section{Lipschitz Games}
%This section presents another class of games in which approximate equilibrium with high welfare can be efficiently computed. 

\section*{Acknowledgements}
The authors thank Nikhil Bansal for helpful discussions. This work was supported by NSF grants CNS-0846025, CCF-1101470, and CNS-1254169, along with a Microsoft research faculty fellowship, a Google faculty research award, and a Linde/SISL postdoctoral fellowship. Katrina Ligett gratefully acknowledges the support of the Charles Lee Powell Foundation.

\bibliographystyle{plain}
\bibliography{apprx-opt-CE}

\end{document}